\documentclass[journal,12pt,onecolumn,draftclsnofoot,]{IEEEtran}

\ifCLASSINFOpdf
\else
   \usepackage[dvips]{graphicx}
\fi
\usepackage{url}
\usepackage{amsmath}
\usepackage{graphicx}
\usepackage{placeins}
\usepackage{float}
\usepackage{bbm}
\usepackage{subfigure}
\usepackage{verbatim}
\usepackage{amssymb}
\usepackage{amsmath}
\usepackage{amsthm}
\usepackage{mwe}
\usepackage{docmute}
\usepackage[ruled, vlined]{algorithm2e}

\setlength{\marginparwidth}{2cm}
\usepackage[disable]{todonotes}

\hyphenation{local-ization}

\usepackage{graphicx}

\newtheorem{definition}{Definition}
\newtheorem{lemma}{Lemma}

\theoremstyle{definition}
\newtheorem{remark}{Remark}
\newtheorem{assumption}{Assumption}

\newcommand{\oldbluetext}[1]{\ignorespaces}
\newcommand{\blue}[1]{\textcolor{black}{#1}}
\newcommand{\revision}[1]{\textcolor{black}{#1}}

\begin{document}

% \title{Distributed Learning with Feedback for Spectrum Sharing in Cognitive Radar Networks}
\title{Hybrid Cognition for Target Tracking in Cognitive Radar Networks}
% \author{William W. Howard, Anthony F. Martone, R. Michael Buehrer
\author{William W. Howard, R. Michael Buehrer
\thanks{W.W. Howard and R.M. Buehrer are with {\it Wireless @ Virginia Tech}, Bradley Department of ECE, Virginia Tech, Blacksburg, VA, 24061. \\ 
Correspondence:$\{${wwhoward}$\}$@vt.edu  \\
% A.F. Martone is with the U.S. Army Research Laboratory, Adelphi, MD 20783. (e-mail:{anthony.f.martone.civ}@army.mil).\\
% The support of the U.S. Army Research Office (ARO) is gratefully acknowledged. \\
Portions of this work were presented at IEEE MILCOM 2022, Rockville, MD, December 2022 \cite{howard2022_decentralized_conf}. }
% \thanks{The support of Applied Signals Intelligence, inc. is gratefully acknowledged. }\vspace{-.6cm}
}

\maketitle
\pagenumbering{roman}

%%%%%%%%%%%%%%%%%%%%%%%%%%%%%%%%%%%%%%%%%%%%%%%%%%%%%%%%%%%%%%%%%%%%%%%%%%%%%%%%%%%%%%%%%
\begin{abstract}
This work investigates online learning techniques for a cognitive radar network utilizing feedback from a central coordinator. 
\blue{The available spectrum is divided into channels, and each radar node must transmit in one channel per time step. }
The network attempts to optimize radar tracking accuracy by learning the optimal channel selection for spectrum sharing and radar performance. 
We define optimal selection for such a network in relation to the radar observation quality obtainable in a given channel. 
\blue{This is a difficult problem since the network must seek the optimal assignment from nodes to channels, rather than just seek the best overall channel. }
Since the presence of primary users appears as interference, the approach also improves spectrum sharing performance. 
In other words, maximizing radar performance also minimizes interference to primary users. 
Each node is able to learn the quality of several available channels through repeated sensing. 
% \blue{We} model the independent \blue{radar} nodes as cognitive agents \blue{as well as the} central coordinator and \blue{assume that} information can be exchanged between \oldbluetext{blue}{the radars and the coordinator.}
\revision{We define hybrid cognition as the condition where both the independent radar nodes as well as the central coordinator are modeled as cognitive agents, with restrictions on the amount of information that can be exchanged between the radars and the coordinator. }
Importantly, each part of the network acts as an online learner, observing the environment to inform future actions. 
We show that in interference-limited spectrum, where the signal-to-interference-plus-noise ratio \revision{varies by channel and over time for a target with fixed radar cross section}, a cognitive radar network is able to use information from the central coordinator in order to reduce the amount of time necessary to learn the optimal channel selection. 
% \revision{We show that when the noise power similar by node location, the amount of information necessary for coordination is reduced. }
% The problem is posed as a sequential game, where the nodes must choose from several available channels to maximize the radar tracking performance. 
We also show that even limited use of a central coordinator can eliminate collisions, which occur when two nodes select the same channel. 
We provide several reward functions which capture different aspects of the dynamic radar scenario and describe the online machine learning algorithms which are applicable to this structure. 
In addition, we study varying levels of feedback, where central coordinator update rates vary.
\oldbluetext{blue}{We compare our algorithms against baselines and demonstrate dramatic improvements in convergence time over the prior art. }
\blue{A network using hybrid cognition is able to use a minimal amount of feedback to achieve much faster convergence times and therefore lower tracking error. }
% To conclude, we conduct a simulation study to validate our results. 
\end{abstract}

\begin{IEEEkeywords}
radar networks, cognitive radar, target tracking, machine learning
\end{IEEEkeywords}

\section{Introduction}
% This work addresses the role of a central coordinator (CC) within a cognitive radar network (CRN) using centralized cognitive techniques to optimize radar tracking performance. 
This work seeks to improve the learning rate of a cognitive radar network (CRN) by introducing a central coordinator (CC) to provide limited feedback. 
Specifically in this work we address the role of a central coordinator within a cognitive radar network using an online learning strategy to achieve coordination as well as optimize radar tracking and spectrum sharing performance. 
Generally, radar networks achieve superior tracking performance than is possible for a single high-powered radar node \cite{554211}. 
This is due in part to the increased spatial diversity \cite{4102537} and spectral agility \cite{Martone_CRN_loop}. 
Distributed nodes can cover a greater area to perform detection, and can exploit more spatial degrees of freedom to more accurately estimate target parameters. 
However, to obtain this superior performance, the individual radar nodes which comprise the radar network must coordinate with each other to efficiently use the available spectrum and avoid causing harmful interference inside or outside the network. 
At the root, this problem is caused by a fundamental need to both explore the available channels and simultaneously exploit the best channels (in terms of tracking performance). 

Fixed, rule-based coordination has been proposed to solve this problem \cite{7272881} \cite{6892937}. 
This works well when the scenario parameters are well-known \emph{a priori}, but can suffer poorer performance when the scenario is more unpredictable (i.e., unknown targets or interference). 

Among other things, the desire for more flexible and adaptable systems motivated the initial research into cognitive radio and radar. 
Cognitive systems, at the core, are defined as possessing the ability to monitor the environment and modify operating parameters towards a goal \cite{haykin2006}. 
Further, Haykin provides the following dichotomy of cognitive networks: 
\begin{enumerate}
    \item \textbf{Distributed Cognition}, where observations from individual nodes are combined at a \emph{fusion center}\footnote{Fusion centers in this \blue{type of network} are assumed to perform no decision functions; i.e., they simply combine measurements and provide data to operators. } but no feedback is provided to the nodes. 
    \item \textbf{Centralized Cognition}, where a \emph{central coordinator}\footnote{Central coordinators are assumed to perform the functions of a fusion center \emph{as well as} performing some decision-making functions. } is the only cognitive agent, collecting observations from each node and dictating future actions. 
\end{enumerate}
Whether distributed or centralized, cognitive systems tend to be online learners due to the necessity to specialize to new, unknown environments and the difficulty of training the network ahead of time for an unknown environment.

Fully distributed cognition \cite{howard2022_MMABjournal} \cite{5535151} is useful when there is a desire for the parts of a CRN to be entirely disjoint and independent. 
Fully distributed approaches rely on consensus techniques \cite{8950073} \cite{mehrabian20a} to exchange information between the parts of a network and to determine optimal actions. 
In the radar context, these techniques can be very slow (requiring $10^4$ or greater time steps to converge to optimal actions) and can cause a large amount of mutual interference. 
Such a convergence rate can be problematic in some settings. 

Centralized cognition is not without trade-offs either. 
When cognition is limited to the CC, the individual nodes become over-reliant on the CC. 
The feedback costs can also grow immense, as we will show.

This work investigates \emph{hybrid cognition}, seeking the minimal amount of feedback necessary in a CRN to obtain near-optimal radar tracking performance in a short time without sacrificing node-level cognition. 
Our previous work \cite{howard2021_multiplayerconf} \cite{howard2022_MMABjournal} considered \emph{strictly} decentralized techniques, and did not assume the presence of a CC. 
While our approach was effective under these circumstances, the technique resulted in a relatively slow convergence rate. 
%\color{blue}
% This work differs by assuming a different system model, incorporating feedback from the CC, and analyzing different techniques than previously. 
% More specifically, our previous work was limited in scope to only consider those algorithms which were well-suited to the completely decentralized scenario. 
Our current work provides a generalization to model the CRN as containing a CC which can communicate and provide feedback to the radar nodes, with a goal of speeding up convergence. 
%\color{black}

\subsection{Contributions} 
% This paper contributes the following: 
This paper makes the following contributions to the state of the art: 
\begin{itemize}
    \item The first work studying the role of feedback in cognitive radar networks.
    In particular, we study the case where cognition is divided between a Central Coordinator and the individual Cognitive Radar Nodes. 
    We do this by developing a framework for feedback, then structuring several algorithms which take advantage of different levels of feedback. 
    We show that there is a direct correlation between feedback and target tracking performance. 
    \item A system model for analyzing feedback in CRNs, where a CC provides data fusion as well as cognitive functions. 
    This is useful for future works, as such a model does not yet exist in the literature. 
    \item A mathematical analysis of the different reward functions available to learning algorithms in such a framework. 
    In addition, we discuss when approximations to these rewards may be merited. 
    \item We modify an existing decentralized algorithm \cite{mehrabian20a} to introduce feedback. 
    \item We supply simulations comparing our proposed model against techniques without feedback as well as an oracle which selects the actions which are best in hindsight. 
    \item We show that CRN performance can be significantly improved over short time horizons when feedback is used, and that even infrequent feedback is sufficient to improve convergence time in some scenarios. 
\end{itemize}

\subsection{Notation}
We use the following notation. 
Matrices and vectors are denoted as bold upper $\mathbf{X}$ or lower $\mathbf{x}$ case letters respectively.
Element-wise multiplication of two matrices or vectors is shown as $X\odot Y$. 
Functions are shown as plain letters $F$ or $f$. 
Sets $\mathcal{A}$ are shown as script letters. 
The cardinality $|\mathcal{A}|$ of a set $\mathcal{A}$ refers to the number of elements in that set. 
% The logical negation of a statement $a$ is given by an overline $\overline{a}$. 
The transpose operation is $\mathbf{X}^T$. 
% The backslash $\mathcal{A}\backslash \mathcal{B}$ represents the set difference. 
% Functions which indicate whether a value $x$ is present in a set $\mathcal{A}$ are denoted as $\mathbbm{1}_\mathcal{A}(x)$. 
The set of all real numbers is $\mathbb{R}$ and the set of integers is $\mathbb{Z}$. 
The speed of electromagnetic radiation in a vacuum is given as $c$. 
% Element-wise (scalar) multiplication is denoted as $x*y$. 
The Euclidean norm of a vector $\mathbf{x}$ is written as $||\mathbf{x}||$. 
Estimates of a true parameter $p$ are given as $\hat{p}$. 
\revision{Bachmann-Landau asymptotic notation is written as $\mathcal{O}\left(\cdot\right)$. }

\subsection{Organization} 
The remainder of this paper is organized as follows. 
Section II discusses previous work in the field of cognitive radar networks and relevant machine learning. 
Section III provides the network system model assumed in this work. 
Section IV covers the relevant learning theory and details the reward models. 
Our proposed algorithms are discussed in Section V and Section VI provides simulations comparing our algorithms against several baselines. 
We draw conclusions in Section VII.

%%%%%%%%%%%%%%%%%%%%%%%%%%%%%%%%%%%%%%%%%%%%%%%%%%%%%%%%%%%%%%%%%%%%%%%%%%%%%%%%%%%%%%%%%

%%%%%%%%%%%%%%%%%%%%%%%%%%%%%%%%%%%%%%%%%%%%%%%%%%%%%%%%%%%%%%%%%%%%%%%%%%%%%%%%%%%%%%%%%

\section{Background}
\subsection{Related Previous Work}
\subsubsection{Cognitive Radar}
Cognitive radar (CR) has been the subject of intense study in recent years. 
In \cite{Martone_CRN_loop}, the authors survey recent work in spectrum sharing for cognitive radar. 
Since CR has inherent operational flexibility, it is \oldbluetext{blue}{natural} to implement spectrum sharing in environments where CR nodes are secondary users. 
\oldbluetext{blue}{Cognitive radar, as a field, has been investigated since the early 2000s \cite{8961364} \cite{8398580}. 
Various parameters have been exposed to cognitive decision-making: target parameter estimation, resource management, RF filtering, waveform selection, etc. }
\revision{Real-time implementation is investigated in \cite{9226487}, where the authors emply a sense-and-avoid strategy and a cognitive perception-action cycle. }
The authors of \cite{9455255} and \cite{thornton2022_universaljournal} investigate single-node cognitive radar and apply detailed machine learning techniques describing waveform selection techniques and adapting them to a broad class of target models.

Early research into cognitive systems was motivated in part by biological systems \cite{4141064}. 
Researchers wished to enable cognitive agents to display the adaptive intelligence \oldbluetext{blue}{and decision making capabilities} exhibited by biological systems. 
In general, this is accomplished through observation of the environment and use of statistical or machine learning algorithms to act on new information \cite{1593335}. 
\oldbluetext{blue}{Adaptive} radar systems are a \oldbluetext{blue}{good fit} for cognition \oldbluetext{blue}{since they can model} the echolocation abilities of bats \cite{simmons1973resolution}. 
\revision{The work of \cite{AGRAWAL2022101673} provides a review of cognitive processes applied to radar and radio networks. }

\subsubsection{Cogntive Radar Networks}
Cognitive radar networks have also been addressed in the literature, from their proposal in 2006 \cite{haykin2006} to more recent work. 
In general, the work on CRNs has been focused on time allocation (scheduling) or power allocation. 

In an early work on CRNs, the authors of \cite{6494389} propose a beamsteering strategy to split a search space between two radar nodes in a centralized CRN. 
This work showed a performance improvement in both detection and tracking over a network of two traditional radar nodes. 
While the problem addressed resource sharing in CRNs, it is limited in scope to two radar nodes. 

Several works focus on power and dwell time allocation in CRNs \cite{8974269}, \cite{6095653}, \cite{8970598}. 
These works consider CRNs sharing a single channel, which must allocate the limited observation time to the nodes of the networks. 
Instead of considering time division access schemes, our current work considers channelized spectrum. 
Further, many of these works consider pre-allocation schemes rather than the adaptive methods we consider here. 

Scheduling has been applied to the mutual interference problem in radar networks \cite{9545730}, with the goal of reducing pulse collisions within a CRN. 
While this method was shown to be effective and feasible, it relies on pre-allocation of resources which can fail to perform optimally in a dynamic environment where the mean \texttt{SINR} in each channel can vary in time. 

In addition, multistatic cognitive radar networks have been studied \cite{9455245}, where each radar in a network is able to receive and process the pulses transmitted by the other nodes. 
Multistatic radar operation allows for greater tracking accuracy, at a cost of greater amounts of coordination and processing.

\subsubsection{Machine Learning Applied to CRNs}

Statistical and machine learning (ML) approaches are natural for CRNs. 
Since cognitive nodes are able to observe the environment over time and choose from multiple actions, reinforcement learning is particularly well-suited. 
Reinforcement learning is the branch of machine learning that deals with sequential learning in possibly stochastic environments \cite{EXP3}. 
Since the exact interference and target behavior cannot be known in advance, approaches that adapt and generalize to broad classes of environments will out-perform those which depend on specific target and/or interference behavior.

As mentioned above, the purpose of this work is to investigate the balance between distributed and centralized learning. 
As such, we must primarily consider distributed learning models, and how they can be adapted in a hybrid framework. 
% We specifically consider \emph{distributed} learning models in this work. 
Distributed learning spreads components of a learning structure across nodes in a network \cite{shoham2008multiagent}. 
Obviously, distributed learning techniques come with several requirements. 
The selected algorithm must be well-suited to the environment. 
For example, the field of research into federated learning %\cite{https://doi.org/10.48550/arxiv.1912.04977} 
\cite{mcmahan2017communication} \cite{8950073}
investigates isolated models, trained on independent identically distributed (iid) data. 
This iid assumption is not valid in all environments; particularly, since all of the nodes in a CRN sample the same environment and are tracking the same targets, the observations are not independent.

In this work we predominantly employ models from the multi-armed bandit (MAB) literature. 
MAB models are applicable to sequential learning problems where one or several players attempt to maximize rewards observed from action choices. 
The MAB model does not provide the player(s) with prior information regarding the reward for each action choice.
When multiple players are included, the relevant models are called multi-player multi-armed bandits (MMAB) \cite{MMAB_survey}. 
MMAB models are a recent development, motivated primarily by cognitive radio networks \cite{5412697} \cite{5535151}. 
Cognitive radio networks are well studied in the literature, but are a very different problem than cognitive radar: while cognitive radio considers channel capacity and optimization for multiple users, cognitive radar attempts to maximize target tracking and detection.  
Further, the parts of a cognitive radio network have individual goals (i.e., desired data rate), while the parts of a cognitive radar network collaborate on joint goals.

MMABs consider multiple independent players acting on a single action set. 
If multiple players select the same action at the same time, they collide and receive a discounted reward. 
Without cooperation, this can turn into a competition between players for the highest-reward actions, causing collisions and generally reducing performance. 
When the players cooperate, they can instead optimize for network-optimal solutions, rather than single-node optimal solutions. 
Further, the presence of coordination or communication can reduce instances of collisions and improve reward payouts over time.

As with centralized and decentralized cognition models, there exist centralized and decentralized MMAB models. 
Decentralized models must exploit collisions to exchange information \cite{5462144} \cite{avner2019multi} \cite{mehrabian20a}, while centralized models have the use of a side channel for communication \cite{komiyama2015optimal} \cite{pmlr-v28-chen13a}.

Models also exist for adversarial environments \cite{howard2022_adversarialconf}, \cite{MultiAdversarial}, where interferer behavior can be chosen in advance by an adversary which knows the cognitive strategy being used by the CRN. 
This models the scenario where an interferer attempts to force the CRN into a poor performing configuration. 
Our current work considers the case where interferers are oblivious to the CRN and do not modify their behavior in response to CRN actions since they are considered primary users.

In general, decentralized CRNs have not been well addressed in the literature. 
Specifically, there has been no study of the relationship between feedback and CRN performance. 
Further, while CRN time and frequency resource allocation has been investigated, there is a lack of study into adaptive models.

\subsection{Problem Summary}
As covered above, the problem where \texttt{SINR} is constant in space and time and the CRN is completely distributed with no feedback has been studied in \cite{howard2022_MMABjournal}. 
We instead consider the case where the spectrum is interference limited \oldbluetext{blue}{in every channel}, and the \texttt{SINR} varies by node and over time due to target motion \oldbluetext{blue}{and range from each} radar node location. 
%This problem requires different ML approaches as well as a new system model. 

We study reward models which are applicable to this situation. 
We discuss a model where rewards are based solely on average \texttt{SINR} (as determined by the CC), then provide an approximation which reduces the required feedback. 
%\color{blue}
Since the rewards are dependent on both the interference power in the environment as well as the relative target range at each node, the nodes can estimate future rewards by separating these two effects. 
%\color{black}
The goal of the CRN is to predict the \texttt{SINR} for each channel at each node in the following \revision{Coherent Pulse Interval (CPI)}, taking into account observed interference and estimated target behavior. 
Then, each node in the network attempts to select a channel which maximizes the total reward for the network. 
We will study the amount of feedback that a CC can provide in order to accelerate this learning process. 
\oldbluetext{blue}{In particular, we will demonstrate a trade-off between performance and feedback cost. }

This is a coupled estimation problem; the nodes must simultaneously estimate the channel and the target parameters while avoiding other radar nodes in order to learn the environment.

Collisions occur when more than one radar node transmit in the same channel at the same time since they cause unacceptably high levels of interference at the impacted nodes. 
Importantly, the feedback in our network \oldbluetext{blue}{and algorithm} allows collisions to be largely avoided.

\section{Network Structure}
The general structure of our network is as follows. 
The radar network consists of a set $\mathcal{M}$ of $M$ radar nodes. 
% These nodes are distributed uniformly at random throughout an area the size of one square kilometer. 
\oldbluetext{blue}{These nodes are distributed uniformly at random throughout an area 10km by 10km. }
While realistic scenarios should include a third spatial dimension, we consider two dimensional space to reduce the simulation complexity. 
Since the algorithms we will discuss only require the position and velocity of each target, we can make this assumption without loss of generality. 
The position of each node $R_m \in \mathcal{M}$, $(m\leq M)$, is denoted as $\mathbf{p}_m = [x_m, y_m]$. 
Since the the nodes can exchange information through the CC, $\mathbf{p} = [\mathbf{p}_1, \mathbf{p}_2, \dots, \mathbf{p}_M]$ is known to all nodes.

The environment is assumed to contain \oldbluetext{blue}{one target, and (as we show later), the channels are sufficient to reliably detect the target. } 
The complete target tracking and detection structure is discussed in a later section. 
The position of the target is denoted as $\mathbf{y}_w = [x_w, y_w]$. 
The set $\mathcal{N}$ contains the $N$ orthogonal channels $C_n$ of equal bandwidth. 
Each radar node is able to transmit one Linear Frequency Modulated (LFM) chirp waveform in a channel $C_n$ which it must select.

The CRN divides time into CPIs and further into Pulse Repetition Intervals (PRIs). 
Each CPI consists of $512$ PRIs, where a single PRI lasts for $0.1$ \texttt{ms} and a single waveform lasts $1$ \texttt{$\mu$s}. 
During each CPI $k$, each radar node $R_m \in \mathcal{M}$ executes the following (roughly synchronized to the CC clock): 
\begin{enumerate}
    \item Select a channel $C_n \in \mathcal{N}$ using a learning algorithm, and transmit a train of \oldbluetext{blue}{512} LFM pulses. 
    \item Receive the waveforms and process the returns to determine estimates\footnote{Recall that estimates are denoted with a hat $\hat{p}$, while true parameters are denoted without. } of: 
    \begin{enumerate}
        \item Target range $\hat{r}_m(k)$. 
        \item Target radial velocity $\hat{\dot{r}}_m(k)$. 
        \item Target angle of arrival $\hat{\theta}_m(k)$. 
    \end{enumerate}
    \item Transmit the target parameter estimates to the CC. 
    \item Receive a state estimate for the target from the CC: 
    \begin{enumerate}
        \item Target position $\hat{\mathbf{y}}_w(k)$. 
        \item Target velocity $\hat{\dot{\mathbf{y}}}_w(k)$. 
    \end{enumerate}
    \item Update target tracking filter\footnote{Filters are maintained at each node and at the CC. }. 
    % \item Determine the channel metric $P_c(k)$ for the selected channel, where $P_c$ is defined later. 
    \item Update the learning algorithm \oldbluetext{blue}{to allow better choice of channel}, requesting and incorporating CC feedback as necessary. 
\end{enumerate}
Concurrently, the CC performs the following functions: 
\begin{enumerate}
    \item Receive target parameter estimates \oldbluetext{blue}{for all targets from each radar node. }
    \item Fuse these measurements to determine a target state estimate \oldbluetext{blue}{for all targets. }
    \item Transmit the target state estimate to the radar nodes. 
    \item Provide channel selection feedback as required. 
\end{enumerate}
We refer to target \emph{parameter} estimates, which are the range, radial velocity, and angle of arrival for each target \emph{from the perspective of each node}. 
Target \emph{state} estimates are the fused estimates provided by the CC, which include position estimates as well as velocity estimates. 
Figure \ref{fig:system} shows \oldbluetext{blue}{a diagram of} this network structure. 

\begin{figure*}
    \centering
    \includegraphics[scale=0.75]{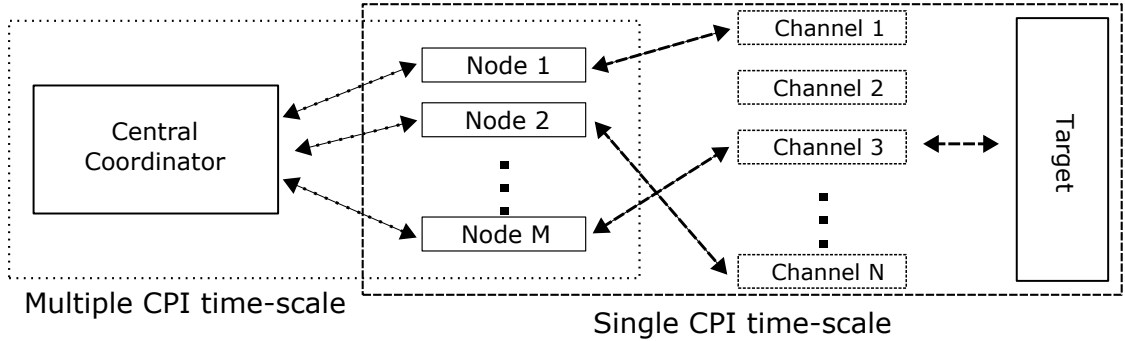}
    \caption{System diagram. Once per CPI, each radar node selects a single channel. Since there are more channels than radar nodes $(N>M)$, some may be unused. However, every radar node will be paired. On a slower timescale (i.e., over multiple CPI's), the radar nodes communicate with and receive feedback from the central coordinator. }
    \label{fig:system}
\end{figure*}

Importantly, we assume that the CRN consists of low cost, low complexity radar nodes. 
This has several implications: 
\begin{itemize}
    \item In order to conserve power, the radar nodes conduct signal processing only once per CPI. 
    %This means that any channel estimation occurs in the presence of signal from radar reflections. 
    \item The transmit arrays of each radar node have sufficient gain to illuminate the target and are electronically steerable. 
    \item Cognition is shared between the CC and nodes to mitigate any duplication of effort. 
\end{itemize}

\subsection{Target and Channel Modeling}
We assume that the environment contains multiple sources in each channel, distributed through space, and sufficient clutter such that the interference has no strong directional components.
This results in interference power with possibly strong variation by channel, but relatively little variation in space. 
For parts of this work, we assume that these spatial variations are sufficient to provide different interference power values at each radar node, but not so much as to cause the \emph{rank} of these values to change. 

\begin{assumption}[Reward Ordering]
\label{assumption:ordering}
    If one radar node observes a greater reward in channel $C_{n_1}$ than in channel $C_{n_2}$, all other nodes will observe the same. The reward magnitudes may however differ. 
\end{assumption}

Note that in some sense, this assumption represents a worst case scenario - while all nodes will observe the same ``best'' channel, only one of them will be able to select it. 
Therefore, in the absence of coordination or feedback, the network would collide frequently.

Later, we will discuss the impact of this assumption, and how it can be relaxed. 
Specifically, we present results with and without this assumption. 

\subsubsection{Signal Model}
In each CPI, each radar node selects a channel $C_n$ with an associated start frequency $f_n$ and transmits a train of $2^{10}$ Linear Frequency Modulated pulse\oldbluetext{blue}{s}. 
Eq. (\ref{eq:pulse}) represents a single pulse. 
\begin{equation}
\label{eq:pulse}
    s[t] = \sin\left[\phi_0 + 2\pi\left(\frac{r}{2}t^2 + f_n t\right)\right], \;\; \oldbluetext{blue}{t \in [t_a, t_b]}
\end{equation}
Here, $t$ is the so-called \emph{fast time} and indexes samples of the pulse, $\phi_0$ is an initial phase, and $r$ is a constant chirp rate.  

The target is modeled as an isotropic scatterer and thus has \oldbluetext{blue}{constant} Radar Cross Section (RCS) as a function of angle-of-arrival $\theta$. 
In addition, we assume that the target \oldbluetext{blue}{response} is not frequency-selective. 
This means that the target will ``look'' the same \oldbluetext{blue}{at all frequencies.} 

We can write the received signal for radar node $R_m$ as Eq. (\ref{eq:rx_model}) where $\tau$ is the propagation delay, $i_n(t)$ is the interference waveform in channel $C_n$ and $n(t)$ is noise.

\begin{equation}
    \label{eq:rx_model}
    y_m[t] =  s\left[(1-\frac{2\dot{r}_m}{c})t - \tau\right] + i_n(t) + n(t)
\end{equation}

Denote the power of the transmitted signal \oldbluetext{blue}{at all nodes} as $P_s = \frac1N \sum_n |s[n]|^2$ and the power of the received signal as Eq. (\ref{eq:rx_total_power}), where $P_{y,m}$ is the power received from the target at node $R_m$, $P_{i,n}$ is the interference power in channel $C_n$, and $\sigma^2$ is the noise power. 

\begin{equation}
    \label{eq:rx_total_power}
    P = \frac1N \sum_n |y_m[n]|^2 = P_{y,m} + P_{i,n} + \sigma^2
\end{equation}
According to the radar equation, the power \oldbluetext{blue}{$P_{y,m}$} should follow Eq. (\ref{eq:rx_power}), where $r_m$ is the target range from the $m^{th}$ node. 

\begin{equation}
    \label{eq:rx_power}   
    P_{y,m} = \frac{P_x G^2\lambda^2\sigma}{(4\pi)^3 r^4_m}
\end{equation}

Since the target RCS is constant over frequencies $f_n$ in the bandwidth we consider and angle $\theta$, it will be constant over radar node measurements. 
Each radar node can form an \emph{estimate} of the future power received from the target as Eq. (\ref{eq:power_estimate}).

\begin{equation}
    \label{eq:power_estimate}
    \hat{P}_m[k_0] = \frac{P_x G^2 \lambda^2}{(4\pi)^3 \hat{r}_m[t+k_0]^4}
\end{equation}

This power estimate depends on an estimate $\hat{r}_m[t+k_0]$ of the range some number $k_0$ of time steps in the future. 
The quality of this estimate will be dictated by the radar observation quality in all time steps until $t$, \oldbluetext{blue}{and is essentially dependent on tracking performance. }

Target estimation quality is directly influenced by channel SINR. 
Denote the SINR experienced by radar node $R_m$ in channel $C_n$ as Eq. (\ref{eq:SINR}). 

\begin{equation}
    \label{eq:SINR}
    \gamma_{m,n} = \frac{P_{y,m}}{P_{i,n} + \sigma^2}
\end{equation}

Since radar measurement quality is influenced by SINR, we'd like to develop a metric which uses this information. So, let the metric be given as Eq. (\ref{eq:channel_metric}) \oldbluetext{blue}{where $\gamma_{m,n}^{(dB)} = 10\log_{10}(\gamma_{m,n})$}.

\begin{equation}
\label{eq:channel_metric}
    \Gamma_{m,n}[t + k_0] = \gamma_{m,n}^{(dB)} - \hat{P}^{(dB)}_{m}[k_0]
\end{equation}

This metric is useful because it allows each radar node $m$ to arrive at a similar estimate of the quality of channel $n$. 
This is necessary due to the distributed nature of the problem; we'd like for the independent radar nodes to be able to avoid colliding with each other (i.e., selecting the same action simultaneously) without communication. 
Prior work has shown that collisions greatly reduce the performance of a radar network \cite{howard2021_multiplayerconf}. 

Note that $\hat{P}^{(dB)}_{m}[k_0] = 10\log_{10}(\hat{P}_m[k_0])$. 
Due to the assumption on interference power ordering, we can now see that if $R_{m_1}$ experiences $P_{m_1,n_1} > P_{m_1,n_2}$ for two channels $C_{n_1}$ and $C_{n_2}$, then $R_{m_2}$ will observe the same power ordering ($P_{m_2,n_1} > P_{m_2,n_2}$) for any choice of radars $R_{m_1}, R_{m_2}$ or channels $C_{n_1}, C_{n_2}$. 

\subsection{Tracking Formulation}
While the spectrum is interference-limited, we assume that the best case channels have SINR high enough for consistent target detection. 
\oldbluetext{blue}{Figure \ref{fig:PdPfa} shows the probability of detection versus probability of false alarm for the best, typical, and worst case network average \texttt{SINR} based on the assumed parameters in Table \ref{tab:params}. }
%Therefore, the network operates in a single-target tracking mode. 

%
\begin{figure}
    \centering
    \includegraphics[scale=0.6]{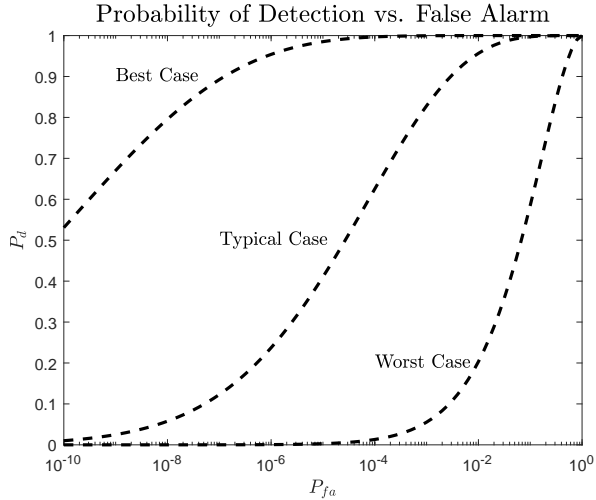}
    \caption{Probability of detection versus probability of false alarm for the worst, average, and best case node-channel matchings. These values are based on the assumed parameters stated in Table \ref{tab:params}. }
    \label{fig:PdPfa}
\end{figure}

When each radar node observes the target, it estimates the target position $\hat{\mathbf{y}}_m(k)$ and velocity $\hat{\dot{\mathbf{y}}}_m(k)$. 
These are used to update a Kalman filter model of the target's motion. Assuming a two-dimensional motion model, the predicted state is given as Eq. (\ref{eq:predict}) where $F_k$ is the transition model. 
Note that the state $x_k = [\mathbf{y}_m(k)^T, \dot{\mathbf{y}}_m(k)^T]^T$ is composed of target position and velocity. 

%12/29/22
\begin{equation}
    \label{eq:predict}
    \hat{x}_{k|k-1} = F_k x_{x-1|k-1} + B_k u_k
\end{equation}
\revision{The terms $B_k$ and $u_k$ correspond to the control input model and corresponding control vector \cite{richards2014principles}. }
The state is then updated as Eq. (\ref{eq:update}) where $K_k$ is the \oldbluetext{blue}{estimated optimal} Kalman gain and $\tilde{y}_k$ is the innovation. 
\begin{equation}
    \label{eq:update}
    x_{k|k} = \hat{x}_{k|k-1} + K_k \tilde{y}_k
\end{equation}
This formulation follows the common notation of \cite{hamilton2020time}. 

\section{Learning Structure}

As is common in the MAB and sequential learning literature, we will define our problem based on a series of \emph{actions} taken by the players (i.e., radar nodes) and \emph{rewards} provided by the environment. 
We are specifically considering a \emph{sensing then collision} \cite{besson2018multi} model. 
This means that the players observe first the prospective reward (discounted, in case of collision) for a given action, followed by information on collisions. 
Collisions are instances of two radar nodes choosing identical actions at the same time. 
However, due to the learning framework and presence of feedback, collisions are unlikely. 
This is a realistic model since previous work has demonstrated a method to detect collisions \cite{howard2022_MMABjournal}. 
In addition, we make the assumption rewards observed by node $R_{m_1}$ are not available to any other node $R_{m_2}$. 
This follows the assumed network structure; nodes can exchange information with the CC but not directly with each other.

\subsection{Matchings and Utility}
Before we can define the learning framework, we need to better \oldbluetext{blue}{describe} the objective. 
Let a \emph{weight matrix} $W$ be \oldbluetext{blue}{a matrix with $M$ rows corresponding to the $M$ radars and $N$ columns corresponding to the $N$ channels. }
Each index consists of the reward\footnote{The specific reward function is defined in the following section. } observed by radar $R_m$ for selecting channel $C_n$ during the $k^{th}$ CPI. 
Valid actions which can be selected by algorithms under our framework must be in the set of all \emph{matchings}. 
\begin{definition}[Matching]
    A \emph{matching} $\pi: \mathcal{M} \to \mathcal{N}$ is any assignment from the set of radar nodes $\mathcal{M}$ to the set of channels $\mathcal{N}$ with the following properties: 
    \begin{enumerate}
        \item Matchings are injective\footnote{While they are injective, matchings are not necessarily bijective since $M$ does not necessarily equal $N$. A function $f:X\to Y$ is injective if for all $a,b$ in $X$, $f(a)=f(b)\implies a=b$.  }. \oldbluetext{blue}{This means that every radar node will select a single channel per CPI, but not every channel will necessarily be used. }
        \item Matchings map every element of $\mathcal{M}$ to a unique element of $\mathcal{N}$. 
    \end{enumerate}
    Denote the set of all matchings as $\Pi$. 
\end{definition}
Matchings are a special case of \emph{mappings}, which remove the injectiveness property. 
Obviously, since the weights vary by radar node and by channel, some matchings will be better than others. 
We measure the quality of a matching via its \emph{utility}. 
\begin{definition}[Utility]
    The \emph{utility} of a matching $\pi$ is the sum of the rewards observed under that matching. 
    \begin{equation}
        U(\pi) = \sum_{\mathcal{M}} W_{m, \pi_m}
    \end{equation}
\end{definition}
The utility of a matching represents the quality of each channel selected for radar observation. 
Note that $W_{m, \pi_m}$ refers to reward observed by $R_m$ due to selecting a channel $C_{\pi_m}$, where $\pi_m$ is the index of matching $\pi$ corresponding to node $R_m$.  
Further, there will be at least one $\pi \in \Pi$ with greatest utility. 
utility describes the quality of measurement obtained by a particular node for selecting a given channel. 
\begin{definition}[Optimal Matching]
    If a matching $\pi\in \Pi$ has maximum utility $U^*$, it is called optimal and denoted $\pi^*$. 
    In other words, $U(\pi) = U^*\implies \pi=\pi^*$. 
    \begin{equation}
        U^* = \max_{\pi\in\Pi} U(\pi)
    \end{equation}
\end{definition}
\begin{remark}
Note that while there may be many $\pi\in\Pi$ with $U(\pi)=U^*$, we slightly abuse notation and simply refer to any optimal matching as $\pi^*$. 
In practice, there is very rarely more than one optimal matching for a given weight matrix. 
\end{remark}

\subsection{Learning Objective} 
The goal of any learning algorithm in this system is then to minimize the amount of non-optimal matchings encountered during a game. 
Of course, since there is no \emph{a priori} knowledge of the environment, it is impossible to avoid selecting non-optimal matchings or even to know the value of $U^*$. 
\oldbluetext{blue}{This is particularly important in radar problems since sub-optimal mappings can result in missed targets. }

It is useful to view a learning algorithm as a function $\mathfrak{f}(\mathbb{E}) = \pi^{(K)}$ which produces a sequence of matchings $\pi^k$, $k = 1:K$ in an environment $\mathbb{E}$ where $K$ is some finite horizon. 
Here, $\pi^k$ is a single matching while $\pi^{(K)}$ denotes a sequence. 
Then, the sequence $\pi^{(K)}$ contains all of the matchings produced by the learning algorithm until CPI $K$. 
Note that $\pi^{(K)}$ is implicitly conditioned on a specific instance of an environment; if the sequence of rewards changes, then the sequence of matchings would change. 
Also note that $\pi^{(K)}_m$ is the slice of actions chosen by radar node $R_m$ until time horizon \revision{$K$}.

We can measure the difference between learning algorithms by comparing the cumulative utility of a matching sequence until CPI \revision{$k<K$}. 
\begin{equation}
    \label{eq:cum_utility}
    U^k(\pi) = \sum_{\kappa=1}^k U(\pi^{\kappa})
\end{equation}
In order to compare all learning algorithms to a universal baseline, we can refer to the utility of the sequence of \emph{optimal} matchings $\pi^*(k)$ for a given environment. 
This quantity is called the \emph{cumulative regret} of $\pi^k$. 
\begin{definition}[Cumulative Regret]
    The \emph{cumulative regret} of a learning algorithm $\mathfrak{f}$ which produces a sequence of matchings $\pi^k$ until time $k$ is the difference in cumulative utility between $U^k(\pi)$ and $U^k(\pi^*)$. 
    \begin{equation}
        \label{eq:cum_regret}
        \rho^t_\mathfrak{f} = U^k(\pi^*) - U^k(\pi)
    \end{equation}
\end{definition}
Note that cumulative regret is monotonically increasing in $k$, since $U(\pi^k) \leq U(\pi^{*,k})$ by definition.

Now, the \emph{objective} of a learning algorithm $\mathfrak{f}$ is to obtain the lowest $\rho^K_\mathfrak{f}$ for some finite time horizon $K$. 

\subsection{Rewards}
The learning problem is not fully defined without specifying the reward function. 
Typically, sequential learning rewards are drawn from some distribution, dependent on the action selected by the learner. 
We will define two different reward functions that capture key aspects of the radar scenario.

The key differentiation between the two reward functions we will describe is an explicit separation between the two underlying estimation processes. 
The \emph{interference estimation} process is the part of the cognitive radar scenario where each node attempts to learn some metric of the interference in each channel. 
The \emph{target estimation} process, however, is the overall goal of the cognitive radar network. 
In the absence of interference estimation, the network may select poor channels over time and therefore sacrifice tracking performance. 
However, if the network attempts to optimize too quickly for radar tracking performance, again it may suffer from selecting sub-optimal actions. 
This is the classic trade-off in sequential learning between \emph{exploration} and \emph{exploitation}. 
The cognitive nodes must efficiently balance the exploration and exploitation in order to avoid sub-optimal long-term performance.

We will first show a reward function that attempts to separate these underlying processes, and then discuss a simpler model.

\subsubsection{\texttt{SINR} Rewards}
The first rewards we consider are based solely on the \texttt{SINR} observed by radar node $R_m$ in channel $C_n$. 
% We denote this environment as $\mathbb{S}$. 
Due to variability in the environment (i.e., target motion or changes in interference), the \texttt{SINR} may change from CPI to CPI. 
We define the \emph{true} \texttt{SINR} observed by node $R_m$ in channel $C_n$ as $\gamma_{m,n}(k)$ and note that this value will vary by node and channel due to relative spatial differences in target position and differences in interference. 
Let the full matrix of these values be denoted as $\gamma(k)$ in a CPI $k$. 
Let each node draw an \emph{estimate} of this \texttt{SINR} as Eq. (\ref{eq:sinr_est}). 
\begin{equation}
    \label{eq:sinr_est}
    \hat{\gamma}_{m,n}(k) \sim \mathcal{N}(\gamma_{m,n}(k), \sigma^2_\gamma)
\end{equation}
%
% Now we can form the weight matrix under this reward function in CPI $k$ as $W_{\hat{S}(k)}$. 
Now we can form each element of the weight matrix under this reward function in CPI $k$ as $W^{\hat{\gamma}(k)}_{m,n} = \hat{\gamma}_{m,n}(k)$. 
Also, we can write the utility of a matching $\pi^k$ under \texttt{SINR} rewards as $U^{\hat{\gamma}}(\pi^k)$. 
\begin{equation}
    \label{eq:sinr_utility}
    U^{\hat{\gamma}}(\pi^k) = \sum_{m\in \mathcal{M}} W^{\hat{\gamma}(k)}_{m, \pi_m}
\end{equation}

\subsubsection{Target Based Rewards}
In practice, \texttt{SINR}-based rewards as described above would require each node to share its observed rewards with the CC, and then to rely on the CC to provide actions. 
This is because one node would have no other way to know the rewards being experienced by another node. 
This reduces the redundancy of the system, since the central coordinator is the only agent making decisions. 
If the radar nodes were instead able to estimate the rewards \oldbluetext{blue}{observed by each} other node, then they would be able to make decisions in the absence of the coordinator.

As shown previously, the channel metric Eq. (\ref{eq:channel_metric}) attempts to decouple the interference behavior from the target motion. 
Following the assumption that interference behavior in each channel is identical \oldbluetext{blue}{as observed by each node}, we can recombine the channel metric with an estimate of the target range at each node to estimate the \texttt{SINR} observed at each location in the network. 
In other words, the channel metric combined with an estimate of the target position can produce a reward estimate \oldbluetext{blue}{while requiring} less information than the matrix $W^{\hat{\gamma}(k)}$.

Now, we can write the elements of this new reward function as $W^{\Gamma}(m,\pi_m) = \frac{1}{\hat{r}_m^4}\Gamma_{m,n}$ or more generally as Eq. (\ref{eq:target_rewards}) where $\overline{\mathbf{r}} = [\hat{r}_1, \hat{r}_2, \dots, \hat{r}_M]$ is a vector of the estimated distance from each node to the target and $\overline{\Gamma}_m =[\Gamma_{m,1}, \Gamma_{m,2}, \dots, \Gamma_{m,N}]$ is a vector of channel metrics calculated by node $R_m$. 
Note that $W^{\Gamma}$ is a $M\times N$ matrix\oldbluetext{blue}{, and we later denote indices with subscripts. }
The channel metric is divided by the estimated distance to each node in order to favor those nodes with better views of the target. 
\begin{equation}
    \label{eq:target_rewards}
    W^{\Gamma} = \left(\frac{1}{\overline{\mathbf{r}}^4}\right)^T \odot \overline{\Gamma}
\end{equation}

Then, the estimated utility under this reward function is expressed as Eq. (\ref{eq:target_utility}). 
\begin{equation}
    \label{eq:target_utility}
    U^{\Gamma}(\pi^k) = \sum_{m\in\mathcal{M}}W^{\Gamma}_{m, \pi_m^k}
\end{equation}

\begin{lemma}[Reward Equivalency]
\label{lem:equivilence}
    The optimal matching under SINR rewards is equal to the optimal matching under target-based rewards when \oldbluetext{blue}{Assumption \ref{assumption:ordering}} holds. 
    \begin{equation}
    \max_{\pi\in\gamma(k)}U(\pi) = \max_{\pi\in W^{\Gamma}(k)}U(\pi)
    \end{equation}
\end{lemma}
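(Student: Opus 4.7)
The plan is to exploit the fact that both reward matrices are rank-one (separable) and share a common channel ordering, so that the optimal matching is forced to coincide by the rearrangement inequality.

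First I would rewrite the SINR reward in product form. Since $\gamma_{m,n} = P_{y,m}/(P_{i,n}+\sigma^2)$, it factors as $\gamma_{m,n} = a_m b_n$ with $a_m := P_{y,m}$ and $b_n := 1/(P_{i,n}+\sigma^2)$. Assumption~\ref{assumption:ordering} guarantees a node-independent ordering of the channels, so without loss of generality $b_1 \ge b_2 \ge \cdots \ge b_N$, while the radar equation $P_{y,m}\propto 1/r_m^4$ induces a corresponding node ranking $a_{m_{(1)}}\ge a_{m_{(2)}}\ge\cdots$ that is likewise independent of the channel.

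Second I would show that the target-based reward is separable with exactly the same orderings. Combining $\Gamma_{m,n} = \gamma^{(dB)}_{m,n} - \hat{P}^{(dB)}_m$ with the radar equation $\hat{P}_m \propto 1/\hat{r}_m^4$, when the range estimate is consistent with the true received power the $m$-dependent parts of $\gamma^{(dB)}_{m,n}$ and $\hat{P}^{(dB)}_m$ cancel, leaving $\Gamma_{m,n} = -10\log_{10}(P_{i,n}+\sigma^2) =: g(n)$, a function of the channel alone. Multiplying by $1/\hat{r}_m^4 \propto a_m$ then yields $W^{\Gamma}_{m,n} \propto a_m\,g(n)$, and since $g(n) = 10\log_{10}(b_n)$ is strictly increasing in $b_n$, the channel ranking under $W^{\Gamma}$ agrees with the one under $\gamma$.

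To finish, I would invoke the rearrangement inequality: for any separable weight matrix $w_{m,n} = f(m) h(n)$ with $f(m) > 0$, the injective assignment $\pi : \mathcal{M} \to \mathcal{N}$ that maximizes $\sum_{m\in\mathcal{M}} f(m) h(\pi_m)$ selects the $M$ largest values of $h$ and pairs them to the nodes in the same rank order as $f$. Because both $\gamma$ and $W^{\Gamma}$ are separable with $f(m) = a_m > 0$ and with coinciding channel orderings by the previous step, $\argmax_{\pi} U^{\hat{\gamma}}(\pi) = \argmax_{\pi} U^{\Gamma}(\pi)$, which is the content of Lemma~\ref{lem:equivilence}. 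The main obstacle I anticipate is the clean separation of the target reward, since $\Gamma_{m,n}$ mixes a logarithmic SINR term with a linear power-estimate factor; pinning down that the $m$-dependence truly drops out rests on the idealization $\hat{P}_m = P_{y,m}$ baked into the channel-metric derivation, after which the reduction from injective matchings on $\mathcal{M}\times\mathcal{N}$ with $M<N$ to a permutation problem on the top $M$ channels, and then to rearrangement, is routine.
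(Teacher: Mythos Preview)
Your argument is correct but takes a genuinely different route from the paper. The paper's proof is a one-line computation: it reads $\Gamma_{m,n}$ back in linear scale as the ratio $\gamma_{m,n}/\hat{P}_m$ (the linear counterpart of the dB difference in Eq.~(\ref{eq:channel_metric})), substitutes the radar-equation form of $\hat{P}_m$, and observes that the $\hat{r}_m^4$ factors cancel, yielding
\[
W^{\Gamma}_{m,n} \;=\; \frac{(4\pi)^3}{P_xG^2\lambda^2}\,\gamma_{m,n}.
\]
The two weight matrices are therefore equal up to a single positive scalar, and the equality of the optimizing matchings is immediate---no separability, no rearrangement inequality, and in fact no explicit appeal to Assumption~\ref{assumption:ordering} (which is automatic once $\gamma_{m,n}=P_{y,m}/(P_{i,n}+\sigma^2)$ factors).

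You instead keep $\Gamma_{m,n}$ as the dB quantity, so that after multiplying by $1/\hat{r}_m^4$ you obtain $W^{\Gamma}_{m,n}\propto a_m\,\log b_n$ rather than $a_m b_n$; the two matrices are then \emph{not} scalar multiples, and you need the extra machinery of rank-one structure plus the rearrangement inequality (and the reduction from $M<N$ injective assignments to permutations on the top $M$ channels) to conclude. What you gain is robustness: your argument is more faithful to the mixed dB/linear definition in Eq.~(\ref{eq:target_rewards}) and would survive replacing the channel factor by any monotone transform of $b_n$. What the paper gains is brevity: by silently converting $\Gamma$ to linear scale it collapses the whole question to ``constant times the same matrix.'' Both are valid; the paper's is the shorter path, yours the more careful one about the stated definitions.
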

\begin{proof}
    See Appendix. 
\end{proof}

Of course, if the optimal matching provided by each reward function is the same as in Lemma \ref{lem:equivilence}, why should a CRN prefer one reward function over the other? 
A single node, without coordination or feedback, can not know the rewards observed by another node without feedback. 
This means that nodes may not be able to establish a consensus. 
However, if the node is able to calculate the channel metric (which is node-independent) and target position, it can then estimate the rewards observed by all other nodes. 
As we will show later, this can allow a CRN to develop a near-optimal matching, while \oldbluetext{blue}{receiving} feedback at semi-regular intervals can further improve this performance. 

\subsection{Feedback}
In addition to measuring the performance of an algorithm through regret, we can analyze the amount of information that algorithm exchanges through the CC. 
In particular, we can look at the average number of floating-point values sent from the CC to each node. 
This will allow us to compare the benefits of varying levels of feedback in the network. 
Call $F_k$ the set of values transmitted by the CC in CPI $k$. 

\begin{definition}[Average Feedback]
    The \emph{average feedback} used by a network in a CPI $k$ is the sum of all feedback $|F_j|$ until $k$ divided by $k$ and the number of nodes $M$. 
    \begin{equation}
    \label{eq:feedback}
        F_a(k) = \frac{1}{Mk}\sum_{j=1}^k |F_j|
    \end{equation}
\end{definition}

\section{Candidate Algorithms}
We present several algorithms, starting with a fully centralized variant and moving towards minimal feedback. 
We do this to investigate the trade-off between feedback and performance. 
Ultimately, we show that a minimal amount of feedback is sufficient to provide dramatic benefits over the prior art, and increasing feedback provides diminishing returns.

Several of the algorithms are based on Explore Then Commit \cite{mehrabian20a}. 
This was initially developed for fully decentralized action selection. 
We make a slight modification to allow for a central coordinator to eliminate a lengthy phase where nodes exchange information.

We will describe an oracle for this problem, which is aware of the true \texttt{SINR} in each time step, and perfectly selects the optimal matching. 
In addition we compare against a naive algorithm which selects a new random matching each CPI as well as a previously proposed decentralized algorithm \cite{howard2022_MMABjournal}.

The algorithms we discuss are summarized in Table \ref{tab:algos}. 
\revision{Since several of these algorithms, though modified for our use, are drawn from the literature, we provide no detailed analysis of their computational complexity. 
Instead we provide references to the source literature. 
In addition, the theoretical regret bound for each algorithm is provided in Table \ref{tab:algos}. }

\subsection{Oracle}
An oracle for this problem knows the true \texttt{SINR} and target position, and thus can perfectly estimate the rewards.  
Therefore, in each CPI $k$, the each node $R_m$ in the network will select $\pi^*_m(k)$. 
This ensures that the cumulative regret for the oracle is always 0. 

\subsection{Centralized Explore-Then-Commit \oldbluetext{blue}{(C-ETC)}}
C-ETC \cite{mehrabian20a} implements the Upper Confidence Bound (UCB) \cite{UCB_fischer} as a threshold to refine a sequence of sets of matchings $\Pi_0, \Pi_1$, etc., with $\Pi = \Pi_0 \supseteq \Pi_1 \supseteq \Pi_2 \supseteq$ etc. 
Each set of matchings $\Pi_j$ can be said to contain $p_j$ matchings and is viewed as an ordered list. 
Each radar node $R_m$ selects channel $\pi_{l,m}\in \pi_l$, and $\pi_l\in\Pi_j$, with $l, j$ specified by C-ETC. 
Since each $\Pi_j$ is a refinement of $\Pi_{j-1}$, $\Pi_0$ can be known to all of the nodes since it is the initial condition. 
Then for each subsequent $\Pi_j, j>0$, the CC can indicate which matchings remain and which are discarded. 
Thus, all nodes know the channels used by other nodes.

This allows each node to observe rewards in the environment sequentially and possibly multiple times\oldbluetext{blue}{, to allow the observed rewards to average towards the true rewards}. 
Over time, the algorithm identifies which matchings $\pi_l\in\Pi_j$ have $U(\pi_l)$ below a threshold, and removes them from $\Pi_{j+1}$. 
\oldbluetext{blue}{This is a fully-centralized algorithm, since decisions are only made at the CC. }
A sketch of our implementation of C-ETC is shown in Algorithm \ref{algo:C-ETC}. 

\vspace{0.1in}
\begin{algorithm}
\SetAlgoLined
% \KwResult{C(k)}
\% CPI k\;
Transmit radar waveform in channel $C(k) = \pi_{l,m}\in \pi_l \in \Pi_j$\;
Estimate target parameters $\hat{r}_m(k), \dot{\hat{r}}_m(k), \hat{\theta}_m(k)$\;
 \uIf{$|\Pi_k| = 1$}{
  Set $\Pi_{j+1} = \Pi_j$\;
  }
 \Else{
  \uIf{$l = |\Pi_j|$}{
   Transmit target estimates and channel estimates to CC\;
   $l = 0$\; 
   Receive $\Pi_{j+1}$ from CC\;
   $j = j + 1$\;
  }
  $l = l + 1$\;
 }
$k = k + 1$\;
 \caption{Sketch of Explore-Then-Commit for node $R_m$}
 \label{algo:C-ETC}
\end{algorithm}
\vspace{0.1in}
In this way, the action selection is delegated to the CC, while the nodes conduct the radar processing. 
The CC also combines radar observations from the network to determine a final target state estimate. 

\subsection{Centralized Explore-Then-Predict \oldbluetext{blue}{(C-ETP)}}
C-ETP modifies C-ETC in one major way\oldbluetext{blue}{, while remaining a centralized algorithm}. 
Rather than committing to a single matching after exploration, C-ETP continues to evaluate the weight matrix until the end of the game. 
From a given matrix, C-ETP calculates the maximum matching at each node and selects it. 
The weight matrix is formed in each CPI $k$ at the CC as $W^{\hat{\gamma}(k)}$ by using the SINR measurements made at each node. 
This results in an estimator that is able to modify action selections over time in order to adapt to the changing environment. 
\oldbluetext{blue}{Since each node is acting on the same information, they can know which actions the other nodes will select. }
This has the relatively large downside of requiring feedback from the CC in every time step, greatly \emph{increasing} the feedback rate after convergence. 
\oldbluetext{blue}{C-ETC does not suffer from this problem since it does not evaluate the reward matrix after convergence. }

The C-ETC-like exploration period is still required for C-ETP due to the noisy observations; without a well-defined exploration structure, the algorithm would fail to converge at all. 
However, as the nodes continue to utilize the best performing channels, the weights may cause two nodes to switch channels due to target position. 

A sketch of the node-side algorithm for C-ETP is shown in Algorithm \ref{algo:C-ETP}. 
Note that if more than one matching in $W^{\hat{\gamma}(k)}$ is optimal, the algorithm selects one at random. 

\revision{The regret bound for ETC based algorithms is $\mathcal{O}\left(\ln T\right)$. 
Since regret represents the distance between optimal actions and the ones selected by an algorithm, we can interpret the regret \emph{bound} as an indicator of duration, i.e., how long the algorithm must run before it ``converges'' to the optimal solution. 
Further, it is shown in \cite{bistritz2021game} that a regret bound of $\mathcal{O}\left(\ln T\right)$ is optimal for this type of algorithm. 
So, the regret bound for ETC based algorithms is optimal with respect to convergence times.}

\begin{table*}[t]
    \centering
    \caption{Candidate Algorithms}
    \begin{tabular}{||c |c |c |c |c |c ||}
    \hline 
    Algorithm & Acronym & Description & Rewards & Feedback & \revision{Regret Bound}\\
    \hline \hline
    Oracle & N/A & Selects optimal matchings. & N/A & None. & \revision{N/A}\\
    \hline 
    Explore-Then-Commit & C-ETC & Full feedback. & SINR & Matching sequence. & \revision{$\mathcal{O}(\ln T)$}\\
    \hline 
    Centralized Explore-Then-Predict & C-ETP & Full feedback. & SINR & Entire reward matrix. & \revision{$\mathcal{O}(\ln T)$}\\
    \hline 
    Hybrid Explore-Then-Predict & H-ETP & Hybrid cognition. & Target-based & Current target state. & \revision{$\mathcal{O}(\ln T)$}\\
    \hline
    Explore-Then-Predict & ETP & Decentralized after setup. & Target-based & Initial matchings. & \revision{$\mathcal{O}(\ln T)$}\\
    \hline
    Musical Chairs \cite{howard2022_MMABjournal} & MC & Completely decentralized. & SINR & None. & \revision{$\mathcal{O}(\ln xT)$}\\
    \hline
    Random Matchings & N/A & Selects random matchings. & N/A & None. & \revision{N/A}\\
    \hline
    \end{tabular}
    
    \label{tab:algos}
\end{table*}
\vspace{0.1in}
\begin{algorithm}
\SetAlgoLined
% \KwResult{C(k)}
\% CPI k\;
Transmit $C(k) = \pi_{l,m}\in \pi_l \in \Pi_j$\;
Estimate $\hat{r}_m(k), \dot{\hat{r}}_m(k), \hat{\theta}_m(k)$\;
 \uIf{$|\Pi_k| = 1$}{
  Receive $W_{\hat{S}(k)}$ from CC\;
  $\Pi_{j+1} = \max_{\pi \in \Pi(W_{\hat{S}(k)})}U(\pi)$\;
  }
 \Else{
  \uIf{$l = |\Pi_j|$}{
   $l = 0$\; 
   Receive $\Pi_{j+1}$ from CC\;
   $j = j + 1$\;
  }
  $l = l + 1$\;
 }
$k = k + 1$\;
 \caption{Sketch of Centralized Explore-Then-Predict for node $R_m$}
 \label{algo:C-ETP}
\end{algorithm}
\vspace{0.1in}

\subsection{Hybrid Explore-Then-Predict \oldbluetext{blue}{(H-ETP)}}
H-ETP alters C-ETP to use the target-based rewards $W^{\Gamma}$ Eq. (\ref{eq:target_rewards}) to make action decisions after convergence. 
This only requires knowledge of the target range from each node. 
So, rather than transmit all of the estimated rewards from each node, H-ETP only requires the current estimated target state to be transmitted from the CC to the nodes. 
This greatly reduces the feedback rate\oldbluetext{blue}{, thus, this algorithm uses hybrid cognition rather than centralized. }

In addition, due to the structure of the target-based rewards and using Lemma \ref{lem:equivilence}, the performance of H-ETP will at least match that of C-ETC with a greatly reduced feedback rate.

\vspace{0.1in}
\begin{algorithm}
\SetAlgoLined
% \KwResult{C(k)}
\% CPI k\;
Transmit $C(k) = \pi_{l,m}\in \pi_l \in \Pi_j$\;
Estimate $\hat{r}_m(k), \dot{\hat{r}}_m(k), \hat{\theta}_m(k)$\;
 \uIf{$|\Pi_k| = 1$}{
  Receive $\hat{\mathbf{y}}_w(k)$ from CC\;
  $\hat{\mathbf{r}} = |\mathbf{p} - \hat{\mathbf{y}}_w(k)|$\;
  $W_{P(k)} = \frac{1}{\hat{\mathbf{r}}} * \overline{\mathbf{P}}$\;
  $\Pi_{j+1} = \max_{\pi \in \Pi(W_{P(k)})}U(\pi)$\;
  }
 \Else{
  \uIf{$l = |\Pi_j|$}{
   $l = 0$\; 
   Receive $\Pi_{j+1}$ from CC\;
   $j = j + 1$\;
  }
  $l = l + 1$\;
 }
$k = k + 1$\;
 \caption{Sketch of Hybrid Explore-Then-Predict for node $R_m$}
 \label{algo:HETP}
\end{algorithm}
\vspace{0.1in}

\subsection{Explore-Then-Predict \oldbluetext{blue}{(ETP)}}
ETP extends this line of thinking (reducing feedback) even further; instead of relying on the network's estimated target state, ETP simply uses the target parameters estimated at each node to estimate the range to each node. 
Due to compounding estimation errors, ETP should have reduced tracking accuracy and \oldbluetext{blue}{higher} regret than either H-ETP or C-ETP, but require the smallest feedback rate. 
\oldbluetext{blue}{Specifically, the only information ETP requires is in the initial part of the game. 
The CC must determine initial matchings to explore in order to prevent collisions. }

\vspace{0.1in}
\begin{algorithm}
\SetAlgoLined
% \KwResult{C(k)}
\% CPI k\;
Transmit $C(k) = \pi_{l,m}\in \pi_l \in \Pi_j$\;
Estimate $\hat{r}_m(k), \dot{\hat{r}}_m(k), \hat{\theta}_m(k)$\;
 \uIf{$|\Pi_k| = 1$}{
  $\hat{\mathbf{y}}_{w,m}(k)$\;
  $\hat{\mathbf{r}} = |\mathbf{p} - \hat{\mathbf{y}}_{w,m}(k)|$\;
  $W_{P(k)} = \frac{1}{\hat{\mathbf{r}}} * \overline{\mathbf{P}}$\;
  $\Pi_{j+1} = \max_{\pi \in \Pi(W_{P(k)})}U(\pi)$\;
  }
 \Else{
  \uIf{$l = |\Pi_j|$}{
   $l = 0$\; 
   Receive $\Pi_{j+1}$ from CC\;
   $j = j + 1$\;
  }
  $l = l + 1$\;
 }
$k = k + 1$\;
 \caption{Explore-Then-Predict for node $R_m$}
 \label{algo:ETP}
\end{algorithm}
\vspace{0.1in}
\subsection{Random Matchings}
We finally consider a naive algorithm which simply selects from a pre-determined random sequence of matchings. 
Before the game begins, let the CC specify an appropriately sized set $\Pi^R$ of matchings. 
In each CPI $k$, let node $R_m$ select channel $\Pi^R_m(k)$. 
This ensures that two nodes never select the same channel in a CPI. 
However, since the matching sequence is pre-determined, the network is not able to learn anything about the environment and simply experiences an average of the possible performance. 
In addition, since the matching sequence is random and not necessarily optimal in any CPI $k$, we should expect roughly linear regret in time. 

\begin{remark}
    Since $\gamma(k)$ (the full matrix of true \texttt{SINR} values) varies in time depending on the target location, algorithms which continue to evaluate their reward matrix will attain lower regret and superior tracking performance than those which retain a fixed allocation post convergence. 
\end{remark}

\subsection{Musical Chairs (MC)}
Musical Chairs is an algorithm developed in \cite{besson2018multi} and applied to CRNs in \cite{howard2022_MMABjournal} for the completely decentralized case. It relies on a system of implicit collisions through which the network establishes the best-case matching available. 
This causes a large amount of regret prior to convergence, which can take up to $10^3$ time steps (much longer than other algorithms considered here).

%%%%%%%%%%%%%%%%%%%%%%%%%%%%%%%%%%%%%%%%%%%%%%%%%%%%%%%%%%%%%%%%%%%%%%%%%%%%%%%%%%%%%%%%%

%%%%%%%%%%%%%%%%%%%%%%%%%%%%%%%%%%%%%%%%%%%%%%%%%%%%%%%%%%%%%%%%%%%%%%%%%%%%%%%%%%%%%%%%%
\section{Simulations}
\begin{table*}[t]
    \centering
    \caption{Simulation parameters, unless stated otherwise. }
    \begin{tabular}{||c | c||c | c||}
    \hline 
    Parameter & Value & Parameter & Value\\
    \hline \hline
    Number of Radar Nodes $M$     & 5 & Number of Targets & 1\\
    \hline 
    % PRI Duration     & $1.024\times10^{-4}$s & Target Velocity & 300 m/s\\
    % \hline 
    PRIs per CPI   & 500 & Target Initial Position & [0,0] \texttt{m}\\
    \hline 
    Total CPIs  & 700 & Bandwidth & 20 \texttt{MHz}\\%& Radar 1 Position & [1000, 0] m\\
    \hline 
    % Radar 2 Position & [0, 1000] m & Radar 3 Position & [1000, 1000] m\\
    % \hline
    Typical \texttt{SINR} & 12 \texttt{dB} & Averaged Simulations & 30\\
    \hline
    Frequency & 2.4 \texttt{GHz}  & PRI Duration & $1.024\times10^{-4}$ \texttt{s}\\
    \hline
    Transmit power & 20 \texttt{dBw} & RCS & 100 \texttt{m}$^2$\\
    \hline 
    Antenna gain & 30 \texttt{dB} &  &  \\
    \hline
    \end{tabular}
    
    \label{tab:params}
\end{table*}

We simulate a CRN with five radar nodes and a CC which collaboratively track a single target. 
The radar nodes are distributed randomly through an area 10 kilometers by 10 kilometers. 
The single target is initially located at the origin, and moves with a velocity of 200\texttt{m/s} headed northeast. 
The target has a uniform radar cross section of $100\texttt{m}^2$. 
These values are consistent with a typical commercial aircraft \cite{knott2006radar}.

The radar nodes have access to eight equally spaced channels of 20\texttt{MHz} each from $2.34-2.5\texttt{GHz}$. 
Each transmitter outputs pulses at 20\texttt{dBw}, and the arrays have a main beam gain of 30\texttt{dB}. 
%The LFM chirp waveforms have a bandwidth of 20\texttt{MHz}. 
These and other simulation parameters are available in Table \ref{tab:params}. 
Fig. \ref{fig:scenario} shows a single instance of this scenario. 
\begin{figure}
    \centering
    \includegraphics[scale=0.6]{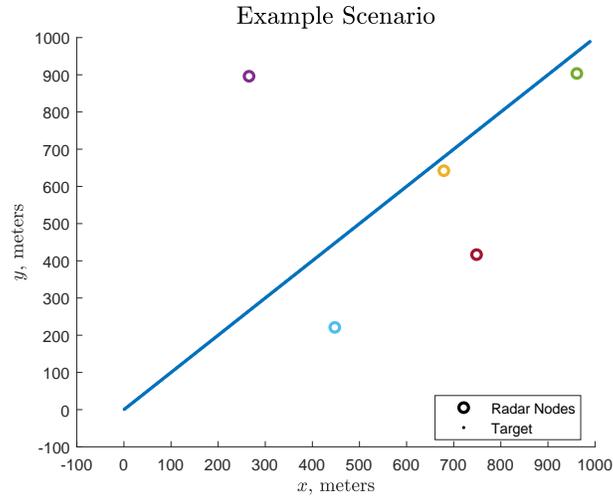}
    \caption{
    The spatial distribution of the radar nodes and the path of the target. 
    Radar positions are drawn from a uniform distribution in each simulation instance. 
    As the target moves through the scene, different radars benefit from selecting different actions. }
    \label{fig:scenario}
\end{figure}

Generally, we should expect to see algorithms with higher feedback exhibit greater performance. 

As discussed above, each algorithm will incur regret due to sub-optimal channel selection. 
In general, we should expect lower regret for algorithms which more closely model the environment. 
In Fig. \ref{fig:regret}, we see that the cumulative average regret of each learning algorithm goes to zero over time. 
However, the regret for the random matching algorithm remains constant in time. 
\oldbluetext{blue}{The learning algorithms take roughly 100 CPIs to converge. 
The decentralized algorithm MC takes much longer than the simulation time to converge (on the order of $10^3$ time steps) and thus exhibits performance roughly equal to random matchings over this short time horizon. }

\begin{figure}
    \centering
    \includegraphics[scale=0.6]{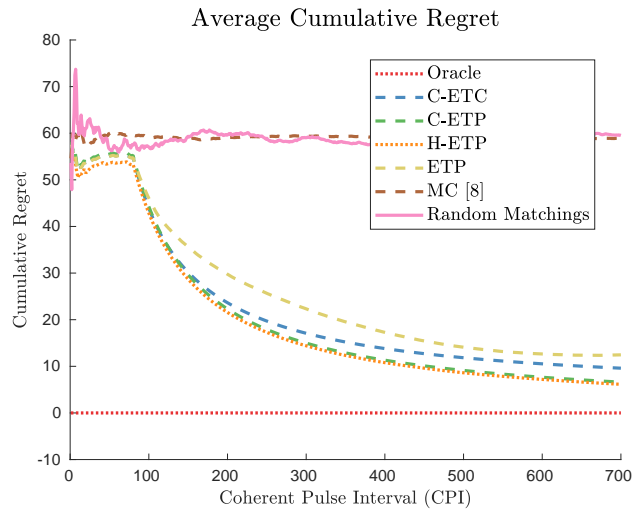}
    \caption{
    After convergence (around CPI 100), the regret of each algorithm goes towards zero. This is because each algorithm is able to identify the best channels to select. 
    However, this identical regret performance does not translate to identical radar tracking performance. }
    \label{fig:regret}
\end{figure}

It is also important to consider the feedback rate for each algorithm. 
As shown in Eq. \ref{eq:feedback}, we can measure the amount of feedback by counting the number of floating-point numbers transferred from the CC to the nodes in each time step. 
Fig. \ref{fig:feedback} shows this. 
Since all of the learning algorithms incorporate an C-ETC-like exploration phase, they all require a similar amount of information in the early part of the game. 
However, once the algorithms converge, this feedback rate begins to change. 

% We can see ETC requires very little feedback after it converges. 
% This is because this algorithm simply continues to choose the same channels for the rest of the game. 
\oldbluetext{blue}{
After convergence (around $100$ CPIs), C-ETC and ETP do not use any feedback, as explained above. 
Before convergence, feedback is only required after the network has explored each list of matchings. 
Therefore, the average feedback rate trends towards zero. 
MC uses no feedback at all and thus is not plotted. 
C-ETP, on the other hand, uses a substantial amount of feedback in each CPI to maintain an updated reward matrix. 
Therefore, the feedback rate increases after convergence. 
Finally, H-ETP uses a small amount of feedback every CPI to maintain knowledge of the target position, using this information to estimate the reward matrix. 
Instead of transferring the entire weight matrix in every CPI, the use of the channel metric allows the H-ETP CC to only transmit the predicted target location in each CPI. 
This reduction in feedback balances cognition between the CC and the nodes. 
}

\oldbluetext{blue}{
ETP uses no feedback after initialization since it relies on node-specific target location estimates to predict the reward matrix. 
Since these internal estimates are less precise, the predicted reward matrix will be less accurate (possibly leading to collisions). 
While not impossible, collisions are very rare events. }

\begin{figure}
    \centering
    \includegraphics[scale=0.6]{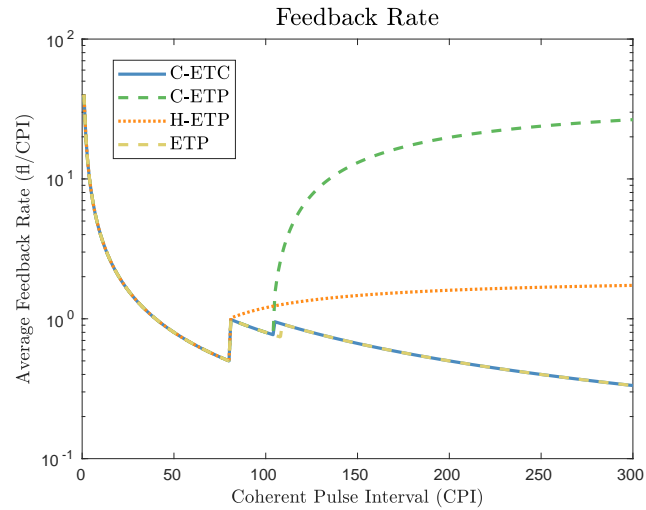}
    \caption{
    The average feedback per node in each of the different networks we examine. 
    ETP is shown to not use any feedback after convergence, while C-ETP uses a great deal of feedback. }
    \label{fig:feedback}
\end{figure}

The radar tracking performance of each algorithm is also important in our application. 
Since the learning algorithms must explore the environment in the early parts of each simulation, we should expect higher tracking error. 
In addition, since we use a Kalman tracking filter, we should see lower error once the filter converges. 
In Fig. \ref{fig:ave_error}, we see that the average error for the learning algorithms is consistently below 10\texttt{m}. 
Occasionally, the environment will shift and the error will spike. 
Each CRN uses the observation from all radar nodes to establish a localization estimate once per CPI, which we compare against the target's true location in the middle of the CPI. 
We can see that the oracle exhibits the best performance for the entire simulation, while Random Matchings has quite variable performance. 
Also note the increased error at the beginning and end of the track. 
This is due to the target being relatively further away from the nodes during these periods, as well as the poor geometry. 
\begin{figure}
    \centering
    \includegraphics[scale=0.6]{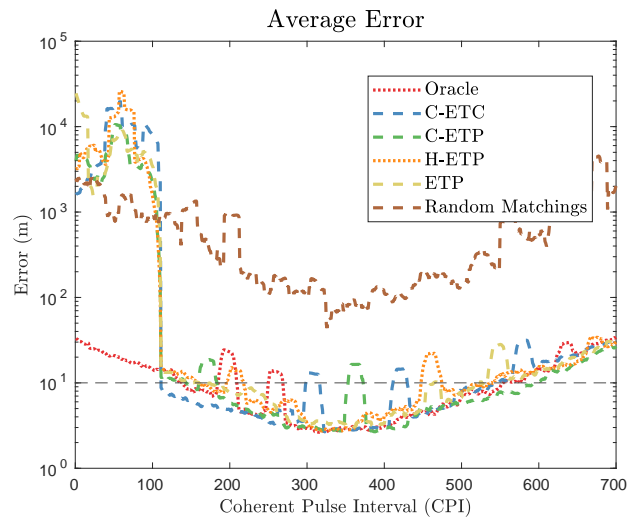}
    \caption{
    Radar localization error, averaged over 30 simulations. 
    The performance of H-ETP is shown to be slightly reduced from C-ETP, but still superior to the other techniques. 
    The increased error in the beginning of the simulation is due to convergence time, both of the machine learning algorithm and of the Kalman tracking filter. }
    \label{fig:ave_error}
\end{figure}

In addition to the average error, we can look at the distribution of the error. 
In Fig. \ref{fig:ecdf_full}, we see an empirical CDF for the error of each algorithm. 
\begin{figure}
    \centering
    \includegraphics[scale=0.6]{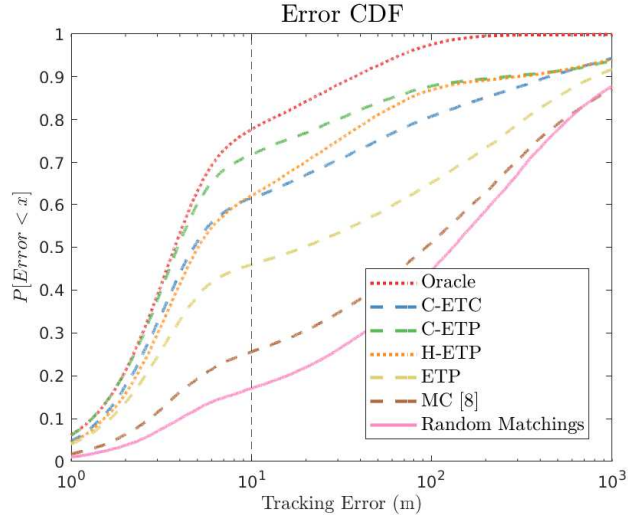}
    \caption{
    The error distribution for each algorithm for the entire simulation. }
    \label{fig:ecdf_full}
\end{figure}
This represents the probability that the error will be less than a given value. 
In general, we would like there to be a high probability that the error is small. 
So, curves which are further to the left are better. 
From this, we can see that the performance of Random Matching is far below the learning algorithms. 
%In addition, we can see that the learning algorithms nearly reach the performance of the oracle. 
%We can see that the performance of the three ETP algorithms exceed that of C-ETC, which is expected since ETP-type algorithms are more capable of adapting to the target motion. 
\oldbluetext{blue}{In addition, the performance of C-ETP nearly reaches that of the oracle. 
This is because C-ETP has a constantly-updated reward matrix containing information from the entire network. 
Of course, as we saw above, this has a high feedback cost. }

\oldbluetext{blue}{
ETP has reduced performance due to inaccurate prediction of the reward matrix caused by node-specific target state estimates. 
We can see that H-ETP, while requiring minimal updating, is able to nearly obtain the performance of the centralized variant. 
}

\oldbluetext{blue}{MC does not demonstrate good performance because, due to the lack of feedback, it must conduct a lengthy exploration phase before a consensus is established. 
This phase is not concluded by the end of the simulation. }

Fig. \ref{fig:ecdf_end} goes on to show the performance of the proposed algorithms \emph{after} they have converged. 
This shows that the performance post convergence is improved over the early game performance. 
\oldbluetext{blue}{In particular, the performance gap between C-ETC and C-ETP is reduced.}
Note that H-ETP still obtains performance between these two. 
% In particular, we see an even greater gap between the performance of C-ETC and the three ETP algorithms. 
% This is because C-ETC does not change action at all in the later half of the game, while the ETP algorithms can. 
%
\begin{figure}
    \centering
    \includegraphics[scale=0.6]{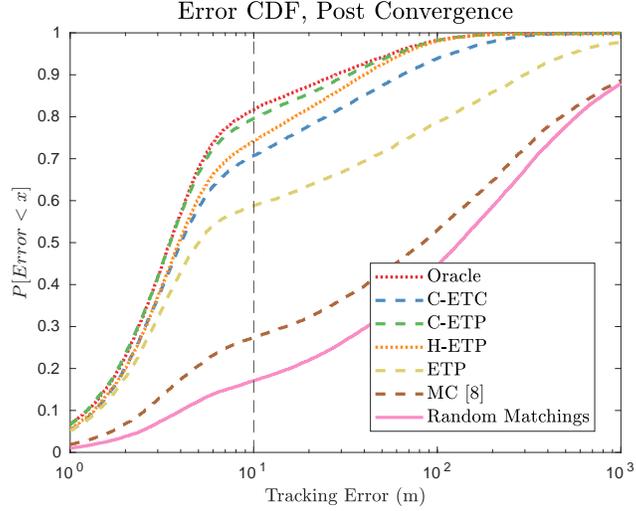}
    \caption{
    The error distribution for each algorithm after convergence. }
    \label{fig:ecdf_end}
\end{figure}

In Fig. \ref{fig:long_regret} we see that over a much longer time frame, the MC algorithm converges to an optimal matching. 
Due to the lack of feedback in MC, this process takes much longer (greater than $2500$ CPIs).

\begin{figure}
    \centering
    \includegraphics[scale=0.6]{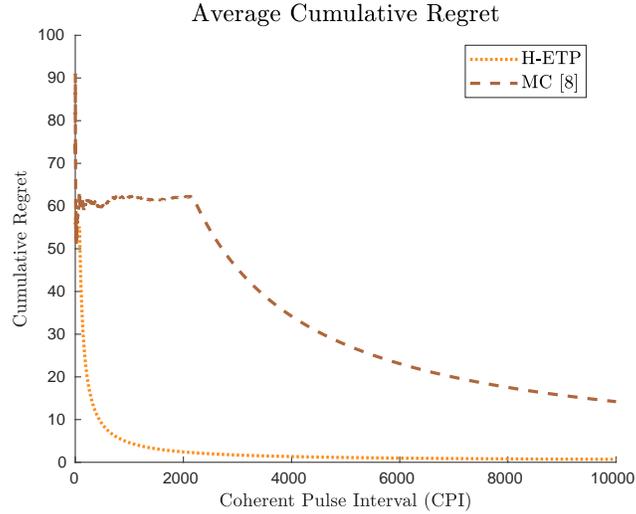}
    \caption{Cumulative regret for a much longer simulation. }
    \label{fig:long_regret}
\end{figure}

When the assumption on reward ordering is removed, we should expect that ETP will underperform, due to the lack of updates and estimation of rewards. 
H-ETP, however, would continue to receive updates as the environment evolves. 
Figure \ref{fig:no_ordering} demonstrates that when the ordering assumption is removed, ETP will slightly underperform. 
Importantly, the only reason that H-ETP continues to obtain near-optimal performance is the feedback rate. 
ETP is still able to learn throughout the game, but without continued feedback is unable to obtain optimal performance. 
This clearly demonstrates the tradeoff we study in this work: greater feedback leads to greater performance. 

\begin{figure}
    \centering
    \includegraphics[scale=0.6]{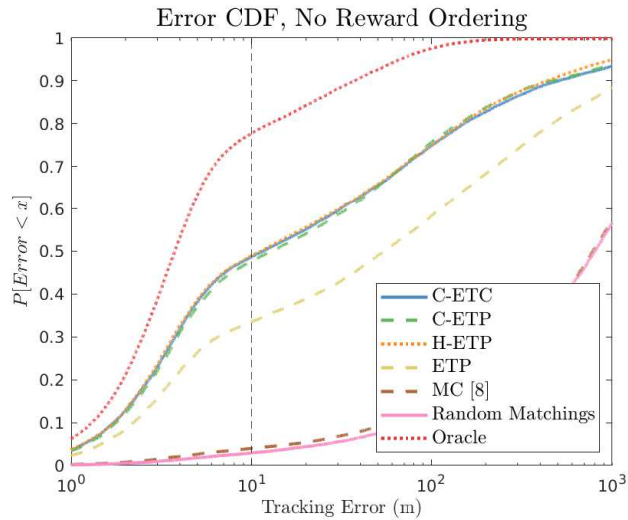}
    \caption{Error distributions without an assumption on reward ordering. We see that H-ETP and ETP both underperform due to the structure of $W^{\Gamma}$. }
    \label{fig:no_ordering}
\end{figure}
%

%%%%%%%%%%%%%%%%%%%%%%%%%%%%%%%%%%%%%%%%%%%%%%%%%%%%%%%%%%%%%%%%%%%%%%%%%%%%%%%%%%%%%%%%%

%%%%%%%%%%%%%%%%%%%%%%%%%%%%%%%%%%%%%%%%%%%%%%%%%%%%%%%%%%%%%%%%%%%%%%%%%%%%%%%%%%%%%%%%%
\section{Conclusions and Future Work}
In this work we have examined a hybrid cognitive radar network which has both cognitive nodes and a central coordinator. 
The network attempts to optimize channel selections for each radar over time in order to optimize radar tracking performance of a single target. 
This problem requires online machine learning techniques due to the desire for low convergence times and applicability across environment instances.

The network uses the observed \texttt{SINR} in each channel to inform rewards for the learning algorithms. 
Due to this, the nodes in the network must simultaneously learn both the interference behavior in each channel and estimate the target position, since both of these impact the \texttt{SINR} observed for transmitting in a given channel. 
This results in a coupled structure, which we use to form \emph{estimates} of the \texttt{SINR} at each node. 
This is useful because use of this estimate can reduce the amount of feedback \oldbluetext{blue}{needed from a central coordinator.}

We proposed and examined several online machine learning algorithms that this network can use to effectively balance this coupled learning problem. 
While previous work has investigated the general CRN learning problem, this work investigated how a network could converge faster, and thus experience better performance for a greater part of the game, through the use of limited feedback from the central coordinator.

We show that algorithms using this estimated \texttt{SINR} can perform almost as well as algorithms receiving full feedback, while algorithms with no feedback under-perform due to extended convergence times. 
% We also show that some feedback still improves the performance of the network. 
This represents the trade-off between feedback and performance: generally, more feedback yields greater performance. 
However, the performance exhibits diminishing returns as feedback increases. 
Importantly, shortened convergence times only provide benefit under finite time horizons, since each technique will ultimately converge to the optimal solution.

Our simulations show that out of the algorithms studied, the centralized ETP algorithm provides the best performance at the highest feedback cost. 
The C-ETC algorithm, which does not adapt to the dynamic environment, is unable to match the performance of algorithms which do adapt to the environment. 
This performance gap is emphasized when analyzing the performance post convergence, once the algorithm attempts to exploit the information it has learned. 
This is due to the fact that the ETP algorithm is able to utilize information about the target to predict future range and \texttt{SINR}.

Our results show that H-ETP demonstrates the best trade-off between performance and feedback. 
For a moderate level of feedback (less than two floating-point values per CPI), H-ETP exhibits tracking performance almost as good as the best-performing algorithm C-ETP, which uses over 30 floating-point values per CPI.

In addition, these centralized algorithms show convergence times on the order of 100 CPIs, while prior work \cite{howard2021_multiplayerconf} required up to 2500 CPIs to meet the same target tracking performance without feedback. 
% This increased performance comes at the cost of communication between nodes and the CC, as well as an implicit reliance on the CC. 
This improved convergence time translates directly to improved performance (i.e., track error in early CPIs) in a dynamic environment, where new optimal matchings may need to be identified rapidly. 
Further, improved convergence times cause less cumulative track error, since the target is accurately tracked much sooner. 
These improvements come at a cost of communication between the nodes and CC, as well as an implicit reliance on the CC. 
We demonstrated that there is a trade-off between the amount of feedback and the performance of the network, and that a minimal amount of feedback is sufficient to cause dramatic gains over the prior art.

While this work focused on a single-target scenario, a multi-target extension is straightforward, as the \texttt{SINR} will still be dependent on target location and environmental interference. 
However, as the number of targets increases, the amount of target parameter estimates sent from each node to the CC will also increase. 
For this reason, future work could examine the impact of CC-level node selection for target state updating. 
Further, the multi-target problem could present opportunities for each node to conduct both active radar tracking and passive observation. 
Future work could also examine the possibilities of CRN mode control.

\begin{appendix}
    \begin{proof}[Proof of Lemma \ref{lem:equivilence}]
        For an element $x_{m,n}\in\gamma(k)$, $y_{m,n}\in W^{\Gamma}(k)$ can be written as Eq. (\ref{eq:pf1}). 
        \begin{align}
        \label{eq:pf1}
            y_{m,n} &= \left(\frac{1}{\hat{r}_m^4(k)}\right) \Gamma_{m,n} \nonumber \\
            &= \left(\frac{1}{\hat{r}_m^4(k)}\right) \frac{\gamma_{m,n}(k)}{\hat{P}_m(k)} \nonumber \\
            &= \left(\frac{1}{\hat{r}_m^4(k)}\right) \gamma_{m,n}(k) \frac{(4\pi)^3\hat{r}_m^4(k)}{P_xG^2\lambda^2} \nonumber \\
            &= \frac{(4\pi)^3}{P_xG^2\lambda^2} \gamma_{m,n}(k) 
        \end{align}
        Now, since each element of $W^\Gamma$ is equal to a constant times $\gamma(k)$, the optimal matching function will not change. 
        \begin{equation}
            \label{eq:pf2}
            \max_{\pi\in\gamma(k)}U(\pi) = \max_{\pi\in W^{\Gamma}(k)}U(\pi)
        \end{equation}
    \end{proof}
\end{appendix}

%%%%%%%%%%%%%%%%%%%%%%%%%%%%%%%%%%%%%%%%%%%%%%%%%%%%%%%%%%%%%%%%%%%%%%%%%%%%%%%%%%%%%%%%%

%\clearpage
% \newpage
\bibliographystyle{IEEEtran}
\bibliography{bibli}

% Generated by IEEEtran.bst, version: 1.14 (2015/08/26)
\begin{thebibliography}{10}
\providecommand{\url}[1]{#1}
\csname url@samestyle\endcsname
\providecommand{\newblock}{\relax}
\providecommand{\bibinfo}[2]{#2}
\providecommand{\BIBentrySTDinterwordspacing}{\spaceskip=0pt\relax}
\providecommand{\BIBentryALTinterwordstretchfactor}{4}
\providecommand{\BIBentryALTinterwordspacing}{\spaceskip=\fontdimen2\font plus
\BIBentryALTinterwordstretchfactor\fontdimen3\font minus
  \fontdimen4\font\relax}
\providecommand{\BIBforeignlanguage}[2]{{%
\expandafter\ifx\csname l@#1\endcsname\relax
\typeout{** WARNING: IEEEtran.bst: No hyphenation pattern has been}%
\typeout{** loaded for the language `#1'. Using the pattern for}%
\typeout{** the default language instead.}%
\else
\language=\csname l@#1\endcsname
\fi
#2}}
\providecommand{\BIBdecl}{\relax}
\BIBdecl

\bibitem{howard2022_decentralized_conf}
W.~W. Howard and R.~M. Buehrer, ``Decentralized bandits with feedback for
  cognitive radar networks,'' in \emph{MILCOM 2022 - 2022 IEEE Military
  Communications Conference (MILCOM)}, 2022, pp. 717--722.

\bibitem{554211}
M.~Liggins, C.-Y. Chong, I.~Kadar, M.~Alford, V.~Vannicola, and S.~Thomopoulos,
  ``Distributed fusion architectures and algorithms for target tracking,''
  \emph{Proceedings of the IEEE}, vol.~85, no.~1, pp. 95--107, 1997.

\bibitem{4102537}
R.~R. Tenney and N.~R. Sandell, ``Detection with distributed sensors,''
  \emph{IEEE Transactions on Aerospace and Electronic Systems}, vol. AES-17,
  no.~4, pp. 501--510, 1981.

\bibitem{Martone_CRN_loop}
A.~F. Martone, K.~D. Sherbondy, J.~A. Kovarskiy, B.~H. Kirk, R.~M. Narayanan,
  C.~E. Thornton, R.~M. Buehrer, J.~W. Owen, B.~Ravenscroft, S.~Blunt,
  A.~Egbert, A.~Goad, and C.~Baylis, ``Closing the loop on cognitive radar for
  spectrum sharing,'' \emph{IEEE Aerospace and Electronic Systems Magazine},
  vol.~36, no.~9, pp. 44--55, 2021.

\bibitem{7272881}
S.~Choi, D.~Crouse, P.~Willett, and S.~Zhou, ``Multistatic target tracking for
  passive radar in a dab/dvb network: initiation,'' \emph{IEEE Transactions on
  Aerospace and Electronic Systems}, vol.~51, no.~3, pp. 2460--2469, 2015.

\bibitem{6892937}
T.~A. Severson and D.~A. Paley, ``Distributed multitarget search and track
  assignment with consensus-based coordination,'' \emph{IEEE Sensors Journal},
  vol.~15, no.~2, pp. 864--875, 2015.

\bibitem{haykin2006}
S.~{Haykin}, ``Cognitive radar networks,'' in \emph{Fourth IEEE Workshop on
  Sensor Array and Multichannel Processing, 2006.}, 2006, pp. 1--24.

\bibitem{howard2022_MMABjournal}
W.~W. Howard, A.~F. Martone, and R.~M. Buehrer, ``Distributed online learning
  for coexistence in cognitive radar networks,'' \emph{IEEE Transactions on
  Aerospace and Electronic Systems}, pp. 1--14, 2022.

\bibitem{5535151}
K.~Liu and Q.~Zhao, ``Distributed learning in multi-armed bandit with multiple
  players,'' \emph{IEEE Transactions on Signal Processing}, vol.~58, no.~11,
  pp. 5667--5681, 2010.

\bibitem{8950073}
S.~Savazzi, M.~Nicoli, and V.~Rampa, ``Federated learning with cooperating
  devices: A consensus approach for massive iot networks,'' \emph{IEEE Internet
  of Things Journal}, vol.~7, no.~5, pp. 4641--4654, 2020.

\bibitem{mehrabian20a}
A.~Mehrabian, E.~Boursier, E.~Kaufmann, and V.~Perchet, ``A practical algorithm
  for multiplayer bandits when arm means vary among players,'' in
  \emph{Proceedings of Machine Learning Research}, S.~Chiappa and R.~Calandra,
  Eds., vol. 108.\hskip 1em plus 0.5em minus 0.4em\relax Online: PMLR, 26--28
  Aug 2020, pp. 1211--1221.

\bibitem{howard2021_multiplayerconf}
W.~W. Howard, C.~E. Thornton, A.~F. Martone, and R.~M. Buehrer, ``Multi-player
  bandits for distributed cognitive radar,'' in \emph{2021 IEEE Radar
  Conference (RadarConf21)}.\hskip 1em plus 0.5em minus 0.4em\relax IEEE, 2021,
  pp. 1--6.

\bibitem{8961364}
S.~Z. {Gurbuz}, H.~D. {Griffiths}, A.~{Charlish}, M.~{Rangaswamy}, M.~S.
  {Greco}, and K.~{Bell}, ``An overview of cognitive radar: Past, present, and
  future,'' \emph{IEEE Aerospace and Electronic Systems Magazine}, vol.~34,
  no.~12, pp. 6--18, 2019.

\bibitem{8398580}
M.~S. Greco, F.~Gini, P.~Stinco, and K.~Bell, ``Cognitive radars: On the road
  to reality: Progress thus far and possibilities for the future,'' \emph{IEEE
  Signal Processing Magazine}, vol.~35, no.~4, pp. 112--125, 2018.

\bibitem{9226487}
J.~A. Kovarskiy, B.~H. Kirk, A.~F. Martone, R.~M. Narayanan, and K.~D.
  Sherbondy, ``Evaluation of real-time predictive spectrum sharing for
  cognitive radar,'' \emph{IEEE Transactions on Aerospace and Electronic
  Systems}, vol.~57, no.~1, pp. 690--705, 2021.

\bibitem{9455255}
C.~E. Thornton, R.~Michael~Buehrer, and A.~F. Martone, ``Constrained online
  learning to mitigate distortion effects in pulse-agile cognitive radar,'' in
  \emph{2021 IEEE Radar Conference (RadarConf21)}, 2021, pp. 1--6.

\bibitem{thornton2022_universaljournal}
C.~E.~Thornton, R.~M. Buehrer, H.~S.~Dhillon, and A.~F.~Martone, ``Universal
  learning waveform selection strategies for adaptive target tracking,''
  \emph{IEEE Transactions on Aerospace and Electronic Systems}, pp. 1--17,
  2022.

\bibitem{4141064}
D.~Vernon, G.~Metta, and G.~Sandini, ``A survey of artificial cognitive
  systems: Implications for the autonomous development of mental capabilities
  in computational agents,'' \emph{IEEE Transactions on Evolutionary
  Computation}, vol.~11, no.~2, pp. 151--180, 2007.

\bibitem{1593335}
S.~{Haykin}, ``Cognitive radar: a way of the future,'' \emph{IEEE Signal
  Processing Magazine}, vol.~23, no.~1, pp. 30--40, 2006.

\bibitem{simmons1973resolution}
J.~A. Simmons, ``The resolution of target range by echolocating bats,''
  \emph{The Journal of the Acoustical Society of America}, vol.~54, no.~1, pp.
  157--173, 1973.

\bibitem{AGRAWAL2022101673}
\BIBentryALTinterwordspacing
S.~K. Agrawal, A.~Samant, and S.~K. Yadav, ``Spectrum sensing in cognitive
  radio networks and metacognition for dynamic spectrum sharing between radar
  and communication system: A review,'' \emph{Physical Communication}, vol.~52,
  p. 101673, 2022. [Online]. Available:
  \url{https://www.sciencedirect.com/science/article/pii/S187449072200043X}
\BIBentrySTDinterwordspacing

\bibitem{6494389}
R.~A. Romero and N.~A. Goodman, ``Cognitive radar network: Cooperative adaptive
  beamsteering for integrated search-and-track application,'' \emph{IEEE
  Transactions on Aerospace and Electronic Systems}, vol.~49, no.~2, pp.
  915--931, 2013.

\bibitem{8974269}
X.~Liu, Z.-H. Xu, L.~Wang, W.~Dong, and S.~Xiao, ``Cognitive dwell time
  allocation for distributed radar sensor networks tracking via cone
  programming,'' \emph{IEEE Sensors Journal}, vol.~20, no.~10, pp. 5092--5101,
  2020.

\bibitem{6095653}
P.~Chavali and A.~Nehorai, ``Scheduling and power allocation in a cognitive
  radar network for multiple-target tracking,'' \emph{IEEE Transactions on
  Signal Processing}, vol.~60, no.~2, pp. 715--729, 2012.

\bibitem{8970598}
H.~Zhang, W.~Liu, J.~Xie, Z.~Zhang, and W.~Lu, ``Joint subarray selection and
  power allocation for cognitive target tracking in large-scale mimo radar
  networks,'' \emph{IEEE Systems Journal}, vol.~14, no.~2, pp. 2569--2580,
  2020.

\bibitem{9545730}
H.~Wang, K.~Liao, S.~Ouyang, and Z.~Bai, ``Advance stagger scheduling of pulses
  in cognitive radar networks,'' in \emph{IET International Radar Conference
  (IET IRC 2020)}, vol. 2020, 2020, pp. 455--458.

\bibitem{9455245}
M.~Jahangir, C.~J. Baker, M.~Antoniou, B.~Griffin, A.~Balleri, D.~Money, and
  S.~Harman, ``Advanced cognitive networked radar surveillance,'' in \emph{2021
  IEEE Radar Conference (RadarConf21)}, 2021, pp. 1--6.

\bibitem{EXP3}
Y.~{Seldin}, C.~{Szepexvari}, P.~{Auer}, and Y.~{Abbasi-Yadkori}, ``Evaluation
  and analysis of the performance of the {EXP3} algorithm in stochastic
  environments,'' in \emph{European Workshop on Reinforcement Learning}, 2012,
  pp. 55--60.

\bibitem{shoham2008multiagent}
Y.~Shoham and K.~Leyton-Brown, \emph{Multiagent Systems: Algorithmic,
  Game-Theoretic, and Logical Foundations}.\hskip 1em plus 0.5em minus
  0.4em\relax Cambridge University Press, 2008.

\bibitem{mcmahan2017communication}
B.~McMahan, E.~Moore, D.~Ramage, S.~Hampson, and B.~A. y~Arcas,
  ``Communication-efficient learning of deep networks from decentralized
  data,'' in \emph{Artificial intelligence and statistics}.\hskip 1em plus
  0.5em minus 0.4em\relax PMLR, 2017, pp. 1273--1282.

\bibitem{MMAB_survey}
\BIBentryALTinterwordspacing
E.~Boursier and V.~Perchet, ``A survey on multi-player bandits,'' 2022.
  [Online]. Available: \url{https://arxiv.org/abs/2211.16275}
\BIBentrySTDinterwordspacing

\bibitem{5412697}
W.~Jouini, D.~Ernst, C.~Moy, and J.~Palicot, ``Multi-armed bandit based
  policies for cognitive radio's decision making issues,'' in \emph{2009 3rd
  International Conference on Signals, Circuits and Systems (SCS)}, 2009, pp.
  1--6.

\bibitem{5462144}
A.~Anandkumar, N.~Michael, and A.~Tang, ``Opportunistic spectrum access with
  multiple users: Learning under competition,'' in \emph{2010 Proceedings IEEE
  INFOCOM}, 2010, pp. 1--9.

\bibitem{avner2019multi}
O.~Avner and S.~Mannor, ``Multi-user communication networks: A coordinated
  multi-armed bandit approach,'' \emph{IEEE/ACM Transactions on Networking},
  vol.~27, no.~6, pp. 2192--2207, 2019.

\bibitem{komiyama2015optimal}
J.~Komiyama, J.~Honda, and H.~Nakagawa, ``Optimal regret analysis of thompson
  sampling in stochastic multi-armed bandit problem with multiple plays,'' in
  \emph{International Conference on Machine Learning}.\hskip 1em plus 0.5em
  minus 0.4em\relax PMLR, 2015, pp. 1152--1161.

\bibitem{pmlr-v28-chen13a}
\BIBentryALTinterwordspacing
W.~Chen, Y.~Wang, and Y.~Yuan, ``Combinatorial multi-armed bandit: General
  framework and applications,'' in \emph{Proceedings of the 30th International
  Conference on Machine Learning}, ser. Proceedings of Machine Learning
  Research, S.~Dasgupta and D.~McAllester, Eds., vol.~28, no.~1.\hskip 1em plus
  0.5em minus 0.4em\relax Atlanta, Georgia, USA: PMLR, 17--19 Jun 2013, pp.
  151--159. [Online]. Available:
  \url{https://proceedings.mlr.press/v28/chen13a.html}
\BIBentrySTDinterwordspacing

\bibitem{howard2022_adversarialconf}
W.~W. Howard, A.~F. Martone, and R.~M. Buehrer, ``Adversarial multi-player
  bandits for cognitive radar networks,'' in \emph{2022 IEEE Radar Conference
  (RadarConf22)}.\hskip 1em plus 0.5em minus 0.4em\relax IEEE, 2022, pp. 1--6.

\bibitem{MultiAdversarial}
P.~Alatur, K.~Y. Levy, and A.~Krause, ``Multi-player bandits: The adversarial
  case,'' \emph{Journal of Machine Learning Research}, 2020.

\bibitem{richards2014principles}
\BIBentryALTinterwordspacing
M.~Richards, \emph{Principles of Modern Radar}.\hskip 1em plus 0.5em minus
  0.4em\relax Institution of Engineering \& Technology, 2014. [Online].
  Available: \url{https://books.google.com/books?id=\_oDDngEACAAJ}
\BIBentrySTDinterwordspacing

\bibitem{hamilton2020time}
\BIBentryALTinterwordspacing
J.~Hamilton, \emph{Time Series Analysis}.\hskip 1em plus 0.5em minus
  0.4em\relax Princeton University Press, 2020. [Online]. Available:
  \url{https://books.google.com/books?id=BeryDwAAQBAJ}
\BIBentrySTDinterwordspacing

\bibitem{besson2018multi}
L.~Besson and E.~Kaufmann, ``Multi-player bandits revisited,'' in
  \emph{Algorithmic Learning Theory}.\hskip 1em plus 0.5em minus 0.4em\relax
  PMLR, 2018, pp. 56--92.

\bibitem{UCB_fischer}
P.~Auer, N.~Cesa-Bianchi, and P.~Fischer, ``Finite-time analysis of the
  multiarmed bandit problem,'' \emph{Machine Learning}, vol.~47, pp. 235--256,
  05 2002.

\bibitem{bistritz2021game}
I.~Bistritz and A.~Leshem, ``Game of thrones: Fully distributed learning for
  multiplayer bandits,'' \emph{Mathematics of Operations Research}, vol.~46,
  no.~1, pp. 159--178, 2021.

\bibitem{knott2006radar}
E.~Knott, \emph{Radar Cross Section Measurements}.\hskip 1em plus 0.5em minus
  0.4em\relax SciTech Pub., 2006.

\end{thebibliography}

\end{document}